\newcommand{\cmark}{\textcolor{green!80!black}{\ding{51}}}
\newcommand{\xmark}{\textcolor{red}{\ding{55}}}
\newcommand{\cE}{\mathcal{E}}
\newcommand{\cG}{\mathcal{G}}
\newcommand{\cW}{\mathcal{W}}
\newcommand{\cN}{\mathcal{N}}
\newcommand{\cF}{\mathcal{F}}
\newcommand{\cU}{\mathcal{U}}
\newcommand{\mc}{\mathcal}
\newcommand{\R}{\mathbb{R}}
\newcommand{\N}{\mathbb{N}}
\newcommand{\x}{x}
\newcommand{\w}{w}
\newcommand{\z}{z}
\newcommand{\y}{y}
\newcommand{\X}{X}
\newcommand{\iter}{t}
\newcommand{\iterp}{t+1}
\newcommand{\xtp}{\x^{\iterp}}
\newcommand{\wtp}{\w^{\iterp}}
\newcommand{\ztp}{\z^{\iterp}}
\newcommand{\ytp}{\y^{\iterp}}
\newcommand{\xt}{\x^{\iter}}
\newcommand{\wt}{\w^{\iter}}
\newcommand{\zt}{\z^{\iter}}
\newcommand{\yt}{\y^{\iter}}
\newcommand{\Wd}{\cW_d}
\newcommand{\Wm}{\cW_m}
\newcommand{\xstar}{\x^\star}
\newcommand{\xistar}{\x^\star_i}
\newcommand{\xstari}{\x^\star_{-i}}
\newcommand{\xitp}{\x_{i}^{\iterp}}
\newcommand{\zitp}{\z_{i}^{\iterp}}
\newcommand{\yitp}{\y_{i}^{\iterp}}
\newcommand{\xit}{\x_{i}^{\iter}}
\newcommand{\zit}{\z_{i}^{\iter}}
\newcommand{\yit}{\y_{i}^{\iter}}
\newcommand{\xjt}{\x_{j}^{\iter}}
\newcommand{\zjt}{\z_{j}^{\iter}}
\newcommand{\yjt}{\y_{j}^{\iter}}
\newcommand{\tz}{\tilde{\z}}
\newcommand{\tw}{\tilde{\w}}
\newcommand{\tztp}{\tz^{\iterp}}
\newcommand{\twtp}{\tw^{\iterp}}
\newcommand{\psitp}{\psi^{\iterp}}
\newcommand{\tzt}{\tz^{\iter}}
\newcommand{\twt}{\tw^{\iter}}
\newcommand{\psit}{\psi^{\iter}}
\newcommand{\bz}{\bar{\z}}
\newcommand{\by}{\bar{\y}}
\newcommand{\pz}{\z_\perp}
\newcommand{\py}{\y_\perp}
\newcommand{\bzt}{\bz^{\iter}}
\newcommand{\bztp}{\bz^{\iterp}}
\newcommand{\byt}{\by^{\iter}}
\newcommand{\pzt}{\pz^{\iter}}
\newcommand{\pztp}{\pz^{\iterp}}
\newcommand{\pyt}{\py^{\iter}}
\newcommand{\tpyt}{\tilde{\y}_\perp^\iter}
\newcommand{\xmi}{\x_{-i}}
\newcommand{\xmit}{\xmi^\iter}
\newcommand{\sigmai}{\sigma_{-i}}
\newcommand{\xibr}{x_{i,\text{br}}}
\newcommand{\chit}{\chi^{\iter}}
\newcommand{\chitp}{\chi^{\iterp}}
\newcommand{\chistar}{\chi^{\star}}
\DeclareMathOperator*{\argmin}{arg\,min} %
\newcommand{\Ji}{J_i}
\newcommand{\F}{F}
\newcommand{\tFi}{\tilde{F}_i}
\newcommand{\Gxi}{G_{\x,i}}
\newcommand{\Gli}{G_{\lambda,i}}
\newcommand{\tF}{\tilde{F}}
\newcommand{\Gx}{G_{\x}}
\newcommand{\Gl}{G_{\lambda}}
\newcommand{\phii}{\phi_i}
\newcommand{\phij}{\phi_j}
\newcommand{\oned}{\mathbf{1}_{N,d}}
\newcommand{\onem}{\mathbf{1}_{N,m}}
\newcommand{\cC}{\mathcal{C}}
\newcommand{\cCi}{\cC_i}
\newcommand{\cmi}{c_{-i}}
\newcommand{\norm}[1]{\left \|#1 \right \|}
\newcommand{\Px}[1]{P_{\X}\left[#1 \right]}
\newcommand{\Pxi}[1]{P_{\X_i}\left[#1 \right]}
\newcommand{\Rd}{R_{d}}
\newcommand{\Rm}{R_{m}}
\newcommand{\lstar}{\lambda^\star}
\newcommand{\lt}{\lambda^\iter}
\newcommand{\ltp}{\lambda^{\iterp}}
\newcommand{\lit}{\lambda_i^\iter}
\newcommand{\ljt}{\lambda_j^\iter}
\newcommand{\litp}{\lambda_i^{\iterp}}
\newcommand{\bl}{\bar{\lambda}}
\newcommand{\blt}{\bl^{\iter}}
\newcommand{\bltp}{\bl^{\iterp}}
\newcommand{\pl}{\lambda_\perp}
\newcommand{\plt}{\pl^\iter}
\newcommand{\tpl}{\tilde{\lambda}_\perp}
\newcommand{\tplt}{\tpl^\iter}
\newcommand{\T}{^\top}
\newcommand{\map}[3]{#1: #2 \rightarrow #3}
\newcommand{\blkdiag}{\text{blkdiag}}
\newcommand{\until}[1]{\{1,\ldots,#1\}}
\newcommand{\col}{\textsc{col}}
\newcommand{\diag}{\text{diag}}
\newcommand*{\QEDB}{\hfill\ensuremath{\square}}%
\renewcommand{\qedsymbol}{$\blacksquare$}
\newcommand\oprocendsymbol{\hbox{$\square$}}
\newcommand\oprocend{\relax\ifmmode\else\unskip\hfill\fi\oprocendsymbol}
\def\eqoprocend{\tag*{$\square$}}
\def\er/{Erd\H{o}s-R\'enyi}
\newcommand{\nw}{n_w}
\def\algo/{{Primal TRADES}}
\def\algoc/{{Primal-Dual TRADES}}
\newtheorem{theorem}{Theorem}[section]
\newtheorem{definition}[theorem]{Definition} 
\newtheorem{lemma}[theorem]{Lemma}
\newtheorem{assumption}[theorem]{Assumption}
\newtheorem{remark}[theorem]{Remark} 
\newtheorem{standing}[theorem]{Standing Assumption}
\begin{document}
\title{Tracking-Based Distributed Equilibrium Seeking for Aggregative Games}
\author{Guido Carnevale, Filippo Fabiani, Filiberto Fele, Kostas Margellos, Giuseppe Notarstefano
\thanks{This result is part of the project ``Distributed Optimization for Cooperative Machine Learning in Complex Networks" (No PGR10067) that has received funding from the Ministero degli Affari Esteri e della Cooperazione Internazionale. F.~Fele gratefully acknowledges support from
	grant RYC2021-033960-I funded by MCIN/AEI/ 10.13039/501100011033 and European Union NextGenerationEU/PRTR, as well as from grant PID2022-142946NA-I00 funded by MCIN/AEI/ 10.13039/501100011033 and by ERDF A way of making Europe.}
\thanks{G.~Carnevale and G.~Notarstefano are with the Department of Electrical,  Electronic and Information Engineering,  Alma Mater Studiorum - Universita` di Bologna,  Bologna, Italy (\texttt{name.lastname@unibo.it}) F. Fabiani is with the IMT School for Advanced Studies Lucca, Piazza San Francesco 19, 55100 Lucca, Italy ({\tt filippo.fabiani@imtlucca.it}). F.~Fele is with the Department of Systems Engineering and Automation, University of Seville, Spain ({\tt ffele@us.es}). K. Margellos is with the Department of Engineering Science, University of Oxford, UK ({\tt kostas.margellos@eng.ox.ac.uk}).%
}
}

\maketitle

\begin{abstract}
    We propose fully-distributed algorithms for Nash equilibrium seeking in aggregative games over networks. We first consider the case where local constraints are present and we design an algorithm combining, for each agent, (i) the projected pseudo-gradient descent and (ii) a tracking mechanism to locally reconstruct the aggregative variable. To handle coupling constraints arising in generalized settings, we propose another distributed algorithm based on (i) a recently emerged augmented primal-dual scheme and (ii) two tracking mechanisms to reconstruct, for each agent, both the aggregative variable and the coupling constraint satisfaction. Leveraging tools from singular perturbations analysis, we prove linear convergence to the Nash equilibrium for both schemes. Finally, we run extensive numerical simulations to confirm the effectiveness of our methods and compare them with state-of-the-art distributed equilibrium-seeking algorithms.
\end{abstract}

\section{Introduction}
\label{sec:introduction}
Recent years have seen an increasing attention to the computation of (generalized) Nash equilibria in games over networks~\cite{menache2011network, scutari2014real,FacchineiKanzowGNE2010}. 
Indeed, numerous applications falling within different domains such as smart grids management~\cite{mohsenian2010autonomous,belgioioso2020energy}, economic market analysis~\cite{okuguchi2012theory}, cooperative control of robots~\cite{fabiani2018distributed}, electric vehicles charging~\cite{deori2018price,fele2020probably,cenedese2019charging}, network congestion control~\cite{barrera2014dynamic}, and synchronization of coupled oscillators in power grids~\cite{yin2011synchronization} can be modeled as networks of selfish agents -- aiming at optimizing their strategy according to an associated individual cost function -- that compete with each other over shared resources.

Among these examples, one can often find instances modeled as an \emph{aggregative} game, where the strategies of all the agents in the network are coupled through the so-called aggregative variable (expressing, e.g., the mean strategy), upon which each agent's cost function depends; see, e.g.,~\cite{jensen2010aggregative,parise2021analysis,belgioioso2022distributed} for a comprehensive overview.
Our work investigates such a framework proposing novel distributed algorithms for generalized Nash equilibrium (GNE) seeking under \emph{partial information}, i.e., assuming that each agent is only aware of its own local information (e.g., its strategy set and cost function) and can communicate only with few agents in the network. 
This restriction naturally calls for the design of fully-distributed mechanisms for GNE seeking.

Our approach is motivated by recent developments in cooperative optimization, where agents in a network collaborate to minimize the sum of individual objective functions depending both on local decision variables and an aggregative variable~\cite{li2021distributed,li2021distributedOnline,carnevale2022distributed,carnevale2022aggregative,wang2022distributed}.

\subsection{Related work}
In the context of NE problems in aggregative form, first attempts to design equilibrium seeking algorithms involve semi-decentralized approaches in which a central entity gathers and shares global quantities (such as the aggregative variable and/or a dual multiplier) with all the agents~\cite{grammatico2015decentralized,belgioioso2017semi,de2019continuous,grammatico2017dynamic,paccagnan2018nash,yi2019operator,kebriaei2021multipopulation,belgioioso2021semi}.

To relax the communication requirements,~\cite{koshal2016distributed} proposes a gradient-based algorithm for non-generalized games with diminishing step-size that relies on dynamic averaging consensus (see, e.g.,~\cite{zhu2010discrete,kia2019tutorial}) to reconstruct the aggregative variable in each agent. 
Such a method has been refined in~\cite{ye2021differentially} to deal with privacy issues and, as a consequence, only guarantees approximate equilibrium computations. 
In~\cite{parise2020distributed}, the distributed computation of an approximate Nash equilibrium is guaranteed through a best-response-based algorithm requiring multiple communication exchanges per iteration. 
In~\cite{cenedese2020asynchronous}, instead, an asynchronous distributed algorithm based on proximal dynamics is proposed.

Looking at GNE problems where the agents' strategies are coupled also by means of constraints,
in~\cite{parise2019distributed} the distributed computation of an approximate NE is guaranteed through an algorithm requiring, however, several communication exchanges per iteration. 
Exact convergence is instead guaranteed in~\cite{belgioioso2020distributed}, where a distributed algorithm with diminishing step-size is proposed, combining dynamic tracking mechanisms, monotone operator splitting, and the Krasnosel'skii-Mann fixed-point iteration. 
An exactly convergent distributed equilibrium-seeking algorithm with constant step-size is given in~\cite{gadjov2020single},
where the authors propose a distributed method based on a forward-backward splitting of two preconditioned operators requiring a double communication exchange per iteration. 
Finally, distributed equilibrium-seeking algorithms based on proximal best-responses are proposed in~\cite{bianchi2022fast}.

\subsection{Contributions}

The main contribution of the paper is the design and the analysis of novel, fully distributed iterative -- i.e., discrete-time -- algorithms for (generalized) NE seeking in aggregative games over networks. 
First, to address the case where local constraints are present, we combine a projected pseudo-gradient method with a local, auxiliary variable that compensates for the lack of knowledge of the aggregative variable in each agent. 
Successively, we deal with the case of coupling constraints, however, no local constraints are present. 
To achieve this, we take inspiration from a recent augmented primal-dual scheme for centralized, continuous-time optimization~\cite{qu2018exponential} and resort to (i) an averaging step to enforce consensus among the agents' multipliers and (ii) two auxiliary variables to locally reconstruct both the aggregative variable and the coupling constraint status. 
Both iterative schemes are analyzed from a system-theoretic perspective that allows us to establish linear convergence to the (G)NE. 
To the best of our knowledge, 
the algorithm proposed for the case where coupling constraints are present is the first distributed scheme in the literature with guaranteed linear convergence to a GNE (see Appendix~\ref{sec:linear} for the formal definition). As such, it constitutes the main contribution of our paper.
Moreover, as discussed in detail in Section~\ref{sec:SP_result}, such a linear convergence rate is enabled by our system-theoretic interpretation,
which offers a new proof-line perspective.
Further, we also guarantee linear convergence when only local constraints are present.
A similar result is also achieved by the recent contribution~\cite{bianchi2022fast} (see~\cite[Rem.~12]{bianchi2022fast}): our algorithm complements the proximal best-response scheme of \cite{bianchi2022fast} by constituting its gradient-based counterpart. Proximal algorithms require solving some optimization program (which in turn may rely on some iterative method), whereas our projected gradient descent step can allow a simpler update rule if the projection can be performed in an easy manner. As such, gradient-based approaches are often computationally less intensive compared to proximal ones, as verified in the numerical simulations of Section~\ref{sec:numerical_simulation} (see Table~\ref{table:execution_times}); however, such a conclusion is case-dependent.

As a side technical contribution, in contrast with existing methods, our algorithms (i) do not require compactness of the local feasible sets and (ii) allow for a general form of the aggregative variable, thus not necessarily requiring the mean of the agents' strategies to operate.
To better classify our work within the existing literature, Tables~\ref{table:unconstrained} and~\ref{table:constrained} compare it with the most relevant works. Specifically, Table~\ref{table:unconstrained} considers the framework without coupling constraints, while Table~\ref{table:constrained} the one with coupling constraints (GNE) (note that some of the technical conditions and variables appearing in the table entries will become clear in the sequel).

The analysis of our iterative algorithms is carried out by relying on a \emph{singular perturbations} approach that allows us to see each procedure as the interconnection between a \emph{slow} subsystem and a \emph{fast} one. 
Specifically, the slow dynamics are produced by the update of the strategies and, in the case with coupling constraints, of the mean of the multipliers over the network. 
The fast dynamics, instead, describes the evolution of the auxiliary variables used to compensate for the lack of knowledge of the global quantities and, in the case with coupling constraints, the consensus error among the agents' multipliers. 
Based on this interpretation, we construct two auxiliary, simplified subsystems, known as \emph{boundary layer} and \emph{reduced} system, to separately study the fast and slow dynamics, respectively. 
Leveraging this connection, we first provide the convergence properties of these auxiliary dynamics with Lyapunov-based arguments, and then we merge the obtained results to establish linear convergence to the (G)NE of the whole interconnection.
This last step relies on a general theorem (cf.~Theorem~\ref{th:theorem_generic}) considering a class of singularly perturbed systems that includes the proposed iterative algorithms. 
In detail, this theorem shows that global exponential stability results for the interconnection can be achieved, while typical results in literature only provide semi-global properties (see \cite[Prop.~8.1]{bof2018lyapunov}, or \cite[Ch.~11]{khalil2002nonlinear} for the continuous-time case).
To the best of our knowledge, similar results are not yet available in the literature: besides the construction of a novel, fully distributed iterative mechanism with appealing features for their practical implementation, they offer a new proof line for equilibrium-seeking problems.

Finally, we provide detailed numerical simulations to confirm the effectiveness of our methods and compare them with state-of-the-art distributed NE-seeking algorithms. %

\begin{table*}
    \centering
    \adjustbox{scale=0.83}{
\begin{tabular}{|c|c|c|c|c|c|}  
    \cline{2-6}
    \multicolumn{1}{c|}{} &\cite{koshal2016distributed} &\cite{parise2020distributed} &\cite{ye2021differentially} &\cite{bianchi2022fast} & Our algorithm\\ 
    \hline
    Linear rate &\xmark &\xmark &\xmark &\cmark &\cmark\\
    \hline
    Step-size &Diminishing &- &Diminishing &- &Constant\\
    \hline
    Exactness &\cmark &\xmark &\xmark &\cmark &\cmark\\
    \hline
    Communications per iterate &$1$ &$v$ &$1$ &$1$ &$1$\\
    \hline 
    Equilibrium assumptions &$F$ strictly monotone &\begin{tabular}{@{}c@{}}$\exists!$ equilibrium, \\ $\xibr(\cdot)$ non-expansive\end{tabular}  &$F$ strongly monotone &$F$ strongly monotone &$F$ strongly monotone\\
    \hline
    Local constraint set &Compact and convex &Compact and convex &Unconstrained &Closed and convex &Closed and convex\\
    \hline
    Gradient unboundedness &\cmark &\cmark &\xmark &\cmark &\cmark\\
    \hline
    Aggregative variable &$\tfrac{1}{N}\sum\limits_{i=1}^N \x_i$ &$\tfrac{1}{N}\sum\limits_{i=1}^N \x_i$ &$\tfrac{1}{N}\sum\limits_{i=1}^N \x_i$ &$\tfrac{1}{N}\sum\limits_{i=1}^N \x_i$ &$\tfrac{1}{N}\sum\limits_{i=1}^N \phi_i(\x_i)$\\
    \hline
    Algorithmic structure &Gradient-based &Best-response-based &Gradient-based &Proximal-based &Gradient-based\\
    \hline
    Graph &\begin{tabular}{@{}c@{}}Undirected, \\ time-varying\end{tabular} &Directed &\begin{tabular}{@{}c@{}}Undirected, \\ time-varying\end{tabular} &Undirected &Directed\\
    \hline
\end{tabular}
}
\caption{Setup without coupling constraints.}\label{table:unconstrained}
\end{table*}
\begin{table*}
    \centering
\adjustbox{scale=0.83}{
    \begin{tabular}{|c|c|c|c|c|c|}  
        \cline{2-6}
        \multicolumn{1}{c|}{} &\cite{parise2019distributed} &\cite{belgioioso2020distributed} &\cite{gadjov2020single} &\cite{bianchi2022fast} &Our algorithm\\ 
        \hline
        Linear rate &\xmark &\xmark &\xmark &\xmark &\cmark\\
        \hline
        Step-size &Constant &Diminishing &Constant &- &Constant\\
        \hline
        Exactness &\xmark &\cmark &\cmark &\cmark &\cmark\\
        \hline
        Communications per iterate &$v$ &$1$ &$2$ &$1$ &$1$\\
        \hline 
        Equilibrium assumptions &$F$ strongly monotone &$F$ cocoercive &$F$ strongly monotone &$F$ strongly monotone &$F$ strongly monotone\\
        \hline
        Local constraints &Compact and convex &Compact and convex &Compact and convex &Closed and convex &Unconstrained\\
        \hline
        Coupling constraints& Not specified &SCQ &SCQ &SCQ &$\kappa_1 I \leq AA\T \leq \kappa_2 I$\\
        \hline
        Gradient unboundedness& \xmark& \cmark &\cmark &\cmark &\cmark\\
        \hline
        Aggregative variable &$\tfrac{1}{N}\sum\limits_{i=1}^N \x_i$ &$\tfrac{1}{N}\sum\limits_{i=1}^N \x_i$ &$\tfrac{1}{N}\sum\limits_{i=1}^N \x_i$ &$\tfrac{1}{N}\sum\limits_{i=1}^N \x_i$ &$\tfrac{1}{N}\sum\limits_{i=1}^N \phi_i(\x_i)$\\
        \hline
        Algorithmic structure &Gradient-based &Gradient-based &Gradient-based &Proximal-based &Gradient-based\\
        \hline
        Graph &Directed &\begin{tabular}{@{}c@{}}Undirected, \\ time-varying\end{tabular} &Undirected &Undirected &Directed\\
        \hline
    \end{tabular}
}
    \caption{Setup with coupling constraints, where SCQ stands for Slater's Constraint Qualification.}\label{table:constrained}
    \end{table*}

\subsection{Paper organization}
In Section~\ref{sec:setup} we introduce aggregative games over networks, while 
in Section~\ref{sec:unconstrained} we propose and analyze a novel distributed algorithm to find NE when only local constraints are present.
In Section~\ref{sec:constrained} we devise a novel distributed GNE-seeking algorithm to address the case of linear coupling constraints. 
Finally, in Section~\ref{sec:numerical_simulation} we provide detailed numerical simulations to test our methods.
The proof of the result on singular perturbations -- instrumental in the derivation of our main theorems -- is deferred to Appendix~\ref{sec:auxiliary}; Appendices~\ref{sec:proof_SP} -- \ref{sec:proof_SP_primal_dual} gather the proofs of all other technical results and lemmas.

\smallskip

\textit{Notation}:
A matrix $M \in \R^{n\times n}$ is Schur if all its eigenvalues lie in the open unit disc.
The identity matrix in $\R^{m\times m}$ is $I_m$.
$0_m$ is the all-zero matrix in $\R^{m\times m}$. 
The vector of $N$ ones is denoted by $1_N$, while
$\oned \coloneqq 1_N \otimes I_d$ with $\otimes$ being the Kronecker product. 
Dimensions are omitted whenever clear from the context. 
Given a function of two variables $f: \R^{n_1} \times \R^{n_2} \to \R$, we denote as $\nabla_1 f \in \R^{n_1}$ its gradient with respect to its first argument and as $\nabla_2 f \in \R^{n_2}$ its gradient with respect to the second one. 
The vertical concatenation of column vectors $v_1, \dots, v_N \in \R^n$ is $\col (v_1, \dots,
v_N)$. 
$\R_{+}^n$ is the positive orthant in $\R^n$.
$\diag(v_1,\dots,v_n)$ denotes the diagonal matrix whose $i$-th diagonal element is given by $v_i$.
$\blkdiag(M_1, \dots, M_N)$ is the block diagonal matrix whose $i$-th block is $M_i \in \R^{n_i \times n_i}$.
Given a vector $x \in \R^{n}$ and a set $\X \subseteq \R^{n}$, $\Px{\x}$ denotes the projection of $\x$ on $\X$. 
For matrix (resp., vector) $A \in \R^{m \times n}$ ($v \in \R^n$), we denote as $[A]_j$ ($[v]_j$) its $j$-th row ($j$-th component). Given two matrices $A,B\in\mathbb{R}^{m\times m}$, $A\succ B$ (resp.~$A \succcurlyeq B$) is equivalent to saying that $A-B$ is positive definite (resp.~semidefinite).
Given $x \in \R^n$ and $M \in \R^{n \times n}$ such that $M=M\T \succ 0$, $\norm{x}_M = \sqrt{x\T M x}$. 

\section{Mathematical preliminaries}
\label{sec:setup}

\subsection{Problem definition and main assumptions}
We consider a population of $N \in \N$ agents -- designated by the set $\mc I \coloneqq \{1,\dots, N\}$ -- whose interaction is described by the following collection of coupled optimization problems:
\begin{subequations}\label{eq:problem_constrained}
\begin{numcases}{\forall i \in \mc I :}
		\underset{\x_i \in \X_i}{\min} \quad \Ji(\x_i, \sigma(\x)) & \\
		~\textrm{ s.t. } \quad A_i\x_i + \sum_{j \in \mc I \setminus \{i\}} A_j\x_j \leq \sum_{i\in\mc I}b_i. & \label{eq:coupl_constr}
\end{numcases}
\end{subequations}
In words, every agent $i\in\mc I$ seeks an individual strategy $\x_i \in \X_i\subseteq \R^{n_i}$ to minimize a local cost defined by the function $\map{\Ji}{\R^{n_i} \times \R^d}{\R}$, which depends on $\x_i$ as well as on some aggregate measure of other agents' strategies $\sigma(\x) \in \R^d$, where $x \coloneqq \col(\x_1,\dots,\x_N) \in \R^n$ and $n \coloneqq \sum_{i=1}^N n_i$. 
The agents' decisions shall satisfy some global constraints which can be expressed in the affine form $Ax\leq b$, where $A \coloneqq \left[ A_1 \, \cdots \, A_N\right] \in\R^{m\times n}$ and $b \coloneqq \sum_{i \in \mc I} b_i\in\R^m$. %
The aggregative variable $\sigma(\cdot)$ formally reads as 
\begin{align}\label{eq:sigma}
    \sigma(\x) \coloneqq \frac{1}{N} \sum_{i \in \mc I}\phii(\x_i),
\end{align}
where each \emph{aggregation rule} $\phii: \R^{n_i} \to \R^d$ models the contribution of the corresponding strategy $\x_i$ to the aggregate $\sigma(\x)$. 
We define the constraint functions $c_i: \R^{n_i} \to \R^m$, $\cmi: \R^{n-n_i} \to \R^{m}$, and $c :\R^n \to \R^m$ as follows:
\begin{subequations}\label{eq:c}
    \begin{align}
        c_i(\x_i) &= A_i\x_i - b_i
        \\
        \cmi(\xmi) &= \sum_{j \in \mc I \setminus \{i\}}(A_j\x_j - b_j)
        \\
        c(\x) &= c_i(\x_i) + \cmi(\xmi) = A\x - b,
    \end{align}
\end{subequations} 
where $\xmi \coloneqq \col(\x_1,\dots,x_{i-1},x_{i+1},\dots,\x_{N}) \in \R^{n-n_i}$.
Then, the collective vector of strategies $\x$ belongs to the feasible set $\cC \coloneqq \{\x \in \X \mid c(\x) \leq 0\} \subseteq \R^n$.

We refer to any equilibrium solution to the collection of inter-dependent optimization problems~\eqref{eq:problem_constrained} as aggregative GNE~\cite{FacchineiKanzowGNE2010} (or simply GNE), and to the problem of finding such an equilibrium as GNE problem (GNEP) in aggregative form -- as opposed to a NE problem (NEP) which is characterized by local constraints only.
We will design distributed algorithms to find aggregative GNEs, which formally correspond to the following definition:
\begin{definition}[\textup{Generalized Nash equilibrium} \cite{FacchineiKanzowGNE2010}]\label{def:GNE}
	A collective vector of strategies $\xstar \in \cC$ is a GNE of \eqref{eq:problem_constrained} if we have:
    \begin{equation*}
		\Ji(\xistar, \sigma(\xstar)) \leq \underset{\x_i\in\cCi(\xstari)}{\min} \, \Ji(\x_i,\tfrac{1}{N}\phii(\x_i) + \sigmai(\xstari)), %
	\end{equation*}
	for all $i \in \mc I$, with $\cCi(\xmi) \coloneqq \{\x_i\in \X_i \mid A_i\x_i \leq b_i - \cmi(\xmi)\}$ and $\sigmai(\xstari) := \frac{1}{N}\sum_{j \in \mc I \setminus \{i\}}\phij(\xstar_j)$.\oprocend
\end{definition} 
We remark that the definition of NE follows directly from the above by noting that, in the case without coupling constraints, it holds $\cCi(\xstari) \equiv \X_i$. %

An equivalent definition of GNE requires one to find a fixed point of the \emph{best response} mapping $\xibr : \R^{n - n_i} \to \R^{n_i}$ of each agent, which is formally defined as:
\begin{align*}%
    \xibr(\xmi) &\in \argmin_{\x_i \in \cCi(\xmi)} \:  \: \Ji\left(\x_i, \sigma(\x)\right) 
    \notag\\
    &= \argmin_{\x_i \in \cCi(\xmi)} \:  \: \Ji\left(\x_i, \tfrac{1}{N}\phii(\x_i) + \sigmai(\xmi)\right),
\end{align*}
In fact, a collective vector of strategies $\xstar$ is a GNE if, for all $i \in \mc I$, $\x_i^\star = \xibr(\xmi^\star)$.
Next, we enforce customary assumptions about the regularity of some quantities in~\eqref{eq:problem_constrained}. 
\begin{standing}[Cost functions]\label{ass:costs}
	For all $i \in \mc I$, the function $J_i(\cdot,\phii(\cdot)/N + \sigmai(\xmi))$ is of class $\cC^1$, i.e., its derivative exists and is continuous, for all $\xmi \in \R^{n-n_i}$. \QEDB%
\end{standing}
A key ingredient in this game-theoretic framework is the so-called \emph{pseudo-gradient mapping} $F: \R^n \to \R^n$:
\begin{equation}\label{eq:F_definition}
	\hspace{-.2cm}F(\x) \coloneqq \col(\nabla_{\x_1} J_1(\x_1,\sigma(\x)), \dots, \nabla_{\x_N} J_N(\x_N,\sigma(\x))).
\end{equation}
With this regard, we also make the following assumption.
\begin{standing}[Strong monotonicity and Lipschitz continuity]
	\label{ass:objective_function}
	$F$ is $\mu$-strongly monotone, i.e., there exists $\mu > 0$ such that
	\begin{equation*}
		(F(x) - F(y))\T (\x - \y) \ge \mu\norm{\x - \y}^2,
	\end{equation*}
	for any $\x, \y \in \R^{n}$.
	Moreover, given any $\x_i, \x_i^\prime \in \R^{n_i}$ and $y, y^\prime \in \R^{n-n_i}$, for all $i \in \mc I$, we assume that 
	\begin{align*}
		\|\nabla_{\x_i}\Ji(\x_i,\phii(\x_i)/N + y) -  \nabla_{\x_i^\prime}\Ji(\x_i^\prime,\phii(\x_i^\prime)/N + y^\prime)\|
        & \leq \beta_1\norm{\col(\x_i,y) - \col(\x_i^\prime,y^\prime)}
        \\
		\norm{\nabla_1\Ji(\x_i,y) - \nabla_1\Ji(\x_i^\prime,y^\prime)} & \leq \beta_1\norm{\col(\x_i,y) - \col(\x_i^\prime,y^\prime)}
		\\
		\norm{\nabla_2\Ji(\x_i,y) - \nabla_2\Ji(\x_i^\prime,y^\prime)} & \leq \beta_2\norm{\col(\x_i,y) - \col(\x_i^\prime,y^\prime)}
		\\
		\norm{\phii(\x_i) - \phii(\x_i^\prime)} & \leq \beta_3\norm{\x_i - \x_i^\prime}. \eqoprocend
	\end{align*}  %
\end{standing}
While assumptions on strong monotonicity and Lipschitz continuity of the game mapping are quite standard in the literature~\cite{paccagnan2018nash,yi2019operator,belgioioso2022distributed}, the second part of Standing Assumption~\ref{ass:objective_function} specializes the Lipschitz properties of the gradients of the cost functions in both the local and aggregate variables, as well as of each single aggregation rule $\phii(\cdot)$. %
Note that we assume partial information, i.e., each agent $i$ is only aware of its own local information $\x_i$, $\Ji$, $\phii$, $A_i$, and $b_i$. 
Moreover, each agent can exchange information with a subset of $\mc I$ only. Specifically,
we consider a network of agents whose communication is performed according to a directed graph $\cG = (\mc I, \cE)$, with $\cE \subset \mc I^2$ such that $i$ can receive information from agent $j$ only if the edge $(j,i)\in\cE$.  
The set of in-neighbors of $i$ is represented by $\cN_i \coloneqq \{j \in \mc I \mid (j,i) \in \cE\}$ (where also $i\in\cN_i$), while $\cN_i^{\text{out}} \coloneqq \{j \in \mc I \mid (i,j) \in \cE\}$ denotes the set of out-neighbors of the agent $i$. 
Graph $\cG$ is associated with a weighted adjacency matrix $\cW \in\R^{N\times N}$ whose entries satisfy $w_{ij} >0$ whenever $(j,i)\in \cE$ and $w_{ij} =0$ otherwise. 
The next assumption characterizes the considered graphs.
\begin{standing}[Network]
	\label{ass:network}
	The graph $\cG$ is strongly connected, i.e., for every pair of nodes $(i,j) \in \mc I^2$ there exists a path of directed edges 
	that goes from $i$ to $j$, and the matrix $\cW$ is doubly stochastic, namely it holds that:
    \begin{align*}
        \cW 1_N = 1_N, \quad 1_N\T \cW = 1_N\T.\eqoprocend
    \end{align*}
\end{standing}

\subsection{A key result on singularly perturbed systems}
\label{sec:SP_result}

The convergence analysis of the iterative schemes introduced in Section~\ref{sec:unconstrained_algorithm} and~\ref{sec:constrained} exploits a system-theoretic perspective based on \emph{singular perturbation}, that strongly relies on the following crucial result proved in Appendix~\ref{sec:auxiliary}.
\begin{theorem}[Global exponential stability for singularly perturbed systems]\label{th:theorem_generic}
	Consider the system
	\begin{subequations}\label{eq:interconnected_system_generic}
		\begin{align}
			\xtp &= \xt + \delta f(\xt,\wt)\label{eq:slow_system_generic}
			\\
			\wtp &= g(\wt,\xt,\delta),\label{eq:fast_system_generic}
		\end{align}
	\end{subequations}	
	with $\xt \in \mathcal{D} \subseteq \R^n$, $\wt \in \R^m$, $\map{f}{\mathcal{D} \times \R^m}{\R^n}$, $\map{g}{\R^m \times \R^n \times\R}{\R^m}$, $\delta > 0$. 
    Let $f$ and $g$ be Lipschitz continuous with respect to both $\xt$ and $\wt$, with Lipschitz constants $L_f$ and $L_g$, respectively.
    Assume that there exists an $L_h$-Lipschitz continuous function $h: \R^n \to \R^m$ such that, for any $\x \in \R^n$, %
    \begin{displaymath}
    	h(\x) = g(h(\x),\x,\delta),
    \end{displaymath}
	and further assume that there exists $\xstar \in \R^n$ such that 
	\begin{displaymath}
		0 = \delta f(\xstar, h(\xstar)).
	\end{displaymath}
    Then, let
	\begin{equation}\label{eq:reduced_system_generic}
		\xtp = \xt + \delta  f(\xt,h(\xt))
	\end{equation}
	be the \emph{reduced system} and
	\begin{equation}\label{eq:boundary_layer_system_generic}
		\psitp = g( \psit + h(\x),\x,\delta) - h(\x)
	\end{equation}
	be the \emph{boundary layer system} with $ \psit \in \R^m$.
	
	Assume that there exists a continuous function $U: \R^m \to \R$ and $\bar{\delta}_1 > 0$ such that, for any $\delta \in (0,\bar{\delta}_1)$ (cf.~\eqref{eq:interconnected_system_generic}), there exist $b_1, b_2, b_3, b_4 > 0$ such that for any $ \psi,  \psi_1,  \psi_2 \in \R^m$, $\x \in \R^n$,
	\begin{subequations}\label{eq:U_generic}
		\begin{align}
			b_1 \norm{ \psi}^2 \leq U( \psi) &\leq b_2\norm{ \psi}^2 \label{eq:U_first_bound_generic}
			\\
			U(g( \psi  +  h(\x),\x,\delta) - h(\x)) -  U( \psi)  &\leq  -b_3\norm{ \psi}^2\label{eq:U_minus_generic}
			\\
			|U( \psi_1)-U( \psi_2)|&\leq b_4\norm{\psi_1- \psi_2}\norm{\psi_1}
			\notag\\
			&\hspace{.5cm}
			+ b_4\norm{\psi_1- \psi_2}\norm{\psi_2}.\label{eq:U_bound_generic}
		\end{align}
	\end{subequations}
	Further, assume there exists a continuous function $W:\mathcal{D} \to \R$ and $\bar{\delta}_2 > 0$ such that, for any $\delta \in (0,\bar{\delta}_2)$, there exist $c_1, c_2, c_3, c_4 > 0$ such that for any 
	$\x, \x_1, \x_2, \x_3 \in \mathcal{D}$
	\begin{subequations}\label{eq:W_generic}
		\begin{align}
			c_1 \norm{\x - \xstar}^2 \leq W(\x) &\leq c_2\norm{\x - \xstar}^2\label{eq:W_first_bound_generic}
			\\
			W(\x + \delta  f(\x,h(\x)))  -  W(\x)  &\leq  -\delta c_3\norm{\x - \xstar}^2\label{eq:W_minus_generic}
			\\
			|W(\x_1)-W(\x_2)|&\leq c_4\norm{\x_1-\x_2}\norm{\x_1 - \xstar}
			+ c_4\norm{\x_1-\x_2}\norm{\x_2 - \xstar}.\label{eq:W_bound_generic}
		\end{align}
	\end{subequations}

	Then, there exist $\bar{\delta} \in (0,\min\{\bar{\delta}_1,\bar{\delta}_2\})$, $a_1 >0$, and $a_2 > 0$ such that, for all $\delta \in (0,\bar{\delta})$, it holds
	\begin{align*}
		\norm{\begin{bmatrix}
				\xt - \xstar\\
				\wt - h(\xt)
		\end{bmatrix}} \leq a_1\norm{\begin{bmatrix}
				\x^0 - \xstar\\
				\w^0 - h(\x^0)
		\end{bmatrix}} e^{-a_2t},
	\end{align*}
	for all $(\x^0,\w^0) \in \mathcal{D} \times \R^m$.\QEDB
\end{theorem}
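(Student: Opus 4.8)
The plan is to run the classical composite-Lyapunov argument for two-time-scale systems, but carried out entirely with the \emph{global} quadratic-type bounds~\eqref{eq:U_generic} and~\eqref{eq:W_generic}, so that no compactness or semi-global restriction enters. First I would pass to the error coordinate $\psit \coloneqq \wt - h(\xt)$, which measures the distance of the fast variable from its quasi-steady-state manifold. Substituting $\wt = \psit + h(\xt)$ into~\eqref{eq:fast_system_generic} shows that the fast update reads $\psitp = \big(g(\psit + h(\xt),\xt,\delta) - h(\xt)\big) + \big(h(\xt) - h(\xtp)\big)$, i.e. it equals one step of the boundary-layer system~\eqref{eq:boundary_layer_system_generic} evaluated at $(\psi,\x)=(\psit,\xt)$, plus the perturbation $h(\xt)-h(\xtp)$ induced by the motion of the slow variable. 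Since $f(\xstar,h(\xstar))=0$ and $h$ is $L_h$-Lipschitz, the Lipschitz continuity of $f$ gives $\norm{\xtp-\xt} = \delta\norm{f(\xt,\wt)} \leq \delta L_f\big((1+L_h)\norm{\xt-\xstar} + \norm{\psit}\big)$, so $\norm{h(\xt)-h(\xtp)}$ is $\mathcal{O}(\delta)$ in both $\norm{\xt-\xstar}$ and $\norm{\psit}$.

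Next I would estimate the one-step variations of $W$ and $U$ separately along the true interconnection. For the slow part I split $W(\xtp)-W(\xt)$ into the reduced-system decrease $W(\xt+\delta f(\xt,h(\xt)))-W(\xt) \le -\delta c_3\norm{\xt-\xstar}^2$ from~\eqref{eq:W_minus_generic}, plus a mismatch term controlled with~\eqref{eq:W_bound_generic}: the two arguments differ by $\delta(f(\xt,\wt)-f(\xt,h(\xt)))$, of norm at most $\delta L_f\norm{\psit}$, while each argument sits within a constant multiple of $\norm{\xt-\xstar}$ (up to an $\mathcal{O}(\delta)\norm{\psit}$ term) of $\xstar$, so the mismatch is bounded by $\mathcal{O}(\delta)\norm{\xt-\xstar}\norm{\psit} + \mathcal{O}(\delta^2)\norm{\psit}^2$. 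For the fast part, the boundary-layer decrease~\eqref{eq:U_minus_generic} yields $-b_3\norm{\psit}^2$, and the Lipschitz-type bound~\eqref{eq:U_bound_generic} applied to the perturbation $h(\xt)-h(\xtp)$ — using $\norm{g(\psit+h(\xt),\xt,\delta)-h(\xt)} = \norm{g(\psit+h(\xt),\xt,\delta)-g(h(\xt),\xt,\delta)} \le L_g\norm{\psit}$ to control the magnitudes of $\psitp$ and of the boundary-layer iterate — produces only terms carrying at least one factor $\delta$, namely $\mathcal{O}(\delta)\norm{\xt-\xstar}\norm{\psit}$, $\mathcal{O}(\delta)\norm{\psit}^2$ and $\mathcal{O}(\delta^2)\norm{\xt-\xstar}^2$. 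The decisive structural point, and the reason the result is global rather than semi-global, is that~\eqref{eq:U_bound_generic}--\eqref{eq:W_bound_generic} bound the increments of $U,W$ by $\norm{\cdot}$ times the state norms rather than by a plain Lipschitz constant; this makes every interaction term genuinely quadratic in $(\norm{\xt-\xstar},\norm{\psit})$ with no residual linear part, so the estimates close uniformly over all of $\mathcal{D}\times\R^m$.

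I would then combine these into $V(\x,\w) \coloneqq W(\x) + U(\w-h(\x))$. Summing the two estimates, $V^{\iterp}-V^\iter$ is bounded by $-\delta c_3\norm{\xt-\xstar}^2 - b_3\norm{\psit}^2$ plus interaction terms of the form $\mathcal{O}(\delta)\norm{\xt-\xstar}\norm{\psit}$, $\mathcal{O}(\delta)\norm{\psit}^2$ and $\mathcal{O}(\delta^2)\norm{\xt-\xstar}^2$. The main technical point, and the step I expect to be the real obstacle, is the uniform-in-$\delta$ bookkeeping of the powers of $\delta$: the slow dissipation $-\delta c_3\norm{\xt-\xstar}^2$ is only $\mathcal{O}(\delta)$ whereas the fast dissipation $-b_3\norm{\psit}^2$ is $\mathcal{O}(1)$, so every cross term must be split by Young's inequality as $\tfrac{\delta c_3}{2}\norm{\xt-\xstar}^2 + \mathcal{O}(\delta)\norm{\psit}^2$, absorbing the first piece into the slow dissipation and the residual $\norm{\psit}^2$ piece into the abundant fast dissipation, while the $\mathcal{O}(\delta^2)\norm{\xt-\xstar}^2$ remainder is dominated by $-\tfrac{\delta c_3}{2}\norm{\xt-\xstar}^2$. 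This produces an explicit threshold $\bar{\delta} \le \min\{\bar{\delta}_1,\bar{\delta}_2\}$ such that, for all $\delta \in (0,\bar{\delta})$, $V^{\iterp}-V^\iter \le -c_5\delta\norm{\xt-\xstar}^2 - c_6\norm{\psit}^2$ for some $c_5,c_6>0$. A fixed weight $\eta>0$ on $U$ could be inserted for extra slack, but is not needed.

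Finally, the quadratic sandwich bounds~\eqref{eq:U_first_bound_generic} and~\eqref{eq:W_first_bound_generic} give $\min\{c_1,b_1\}\big(\norm{\xt-\xstar}^2+\norm{\psit}^2\big) \le V^\iter \le \max\{c_2,b_2\}\big(\norm{\xt-\xstar}^2+\norm{\psit}^2\big)$, so the per-step decrease converts into $V^{\iterp} \le (1-\rho)V^\iter$ with $\rho = \min\{c_5\delta,c_6\}/\max\{c_2,b_2\}$; for small $\delta$ the slow rate $c_5\delta/\max\{c_2,b_2\}$ is the binding one and $\rho\in(0,1)$. Iterating yields $V^\iter \le (1-\rho)^\iter V^0$, and one more use of the sandwich bounds together with $\norm{\xt-\xstar}^2+\norm{\psit}^2 = \norm{\col(\xt-\xstar,\,\wt-h(\xt))}^2$ gives the claimed estimate with $a_1 = \sqrt{\max\{c_2,b_2\}/\min\{c_1,b_1\}}$ and $a_2 = -\tfrac12\ln(1-\rho)>0$. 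Everything outside the third-paragraph order accounting is a direct consequence of the hypotheses.
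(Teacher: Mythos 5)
Your proposal is correct and follows essentially the same route as the paper's proof: pass to the error coordinate $\wt - h(\xt)$, split $\Delta W$ into the reduced-system decrease plus a mismatch controlled by \eqref{eq:W_bound_generic}, split $\Delta U$ into the boundary-layer decrease plus the $h(\xt)-h(\xtp)$ perturbation controlled by \eqref{eq:U_bound_generic}, and close the composite Lyapunov estimate for $V=W+U$ when $\delta$ is small. The only (immaterial) differences are that the paper certifies positivity of the resulting quadratic form via the Sylvester criterion on a $2\times 2$ matrix $Q(\delta)$ rather than via Young's inequality, and invokes a cited Lyapunov theorem for the final geometric-decay conversion instead of iterating $V^{\iterp}\leq(1-\rho)V^{\iter}$ explicitly.
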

Theorem~\ref{th:theorem_generic} establishes a stability result for the system in~\eqref{eq:interconnected_system_generic}, that can be thought of as an interconnection of a fast and a slow subsystem (for sufficiently small $\delta>0$). This is schematically illustrated in Fig.~\ref{fig:bd_interconnection}. 
\begin{figure}[H]
	\centering
	\includegraphics[scale=.95]{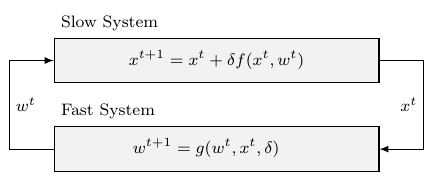}
	\caption{Block diagram of the original interconnected system~\eqref{eq:interconnected_system_generic}.}
	\label{fig:bd_interconnection}
\end{figure}
To analyze this interconnection we study separately the simplified auxiliary systems~\eqref{eq:reduced_system_generic}-\eqref{eq:boundary_layer_system_generic}.
For any $x \in \R^{n}$, $h(x)$ is a parametric equilibrium of the fast subsystem; we can then fix the slow state $x^t = x$ into the fast dynamics~\eqref{eq:fast_system_generic}, to obtain the so-called boundary layer system, as pictorially shown in Fig.~\ref{fig:bd_boundary_layer}. 
This is the auxiliary system described by~\eqref{eq:boundary_layer_system_generic}, whose state $\psi^t$ encodes the distance of the state $\wt$ of the fast subsystem from the equilibrium $h(x)$, once $x$ is fixed.
Existence of a Lyapunov-like function with properties as in \eqref{eq:U_generic} ensures then that for any $x \in \R^{n}$, the boundary layer system is exponentially stable.
\begin{figure}[H]
	\centering
	\includegraphics[scale=.9]{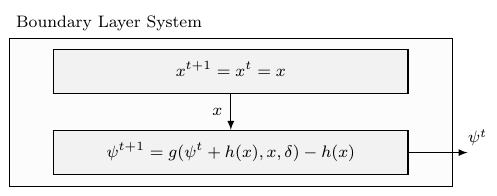}
	\caption{Block diagram of the boundary layer system~\eqref{eq:boundary_layer_system_generic}.}
	\label{fig:bd_boundary_layer}
\end{figure}
Setting now $w^t = h(x^t)$ for all $t\geq0$ in \eqref{eq:slow_system_generic}, i.e., considering the fast state at its parametric equilibrium, we obtain the so-called reduced system, i.e., the auxiliary system~\eqref{eq:reduced_system_generic}.
This is schematically shown in Fig.~\ref{fig:bd_reduced_system}.
Existence of a Lyapunov-like function with properties as in~\eqref{eq:W_generic} ensures then that $x^\star$ is globally exponentially stable for the reduced system.
By properly combining these two Lyapunov functions, Theorem~\ref{th:theorem_generic} ensures that, for sufficiently small values of $\delta$, the point $(\xstar,h(\xstar))$ is globally exponentially stable for the original interconnected system~\eqref{eq:interconnected_system_generic}. The detailed proof is provided in Appendix~\ref{sec:auxiliary}.
\begin{figure}[H]
	\centering
	\includegraphics[scale=.9]{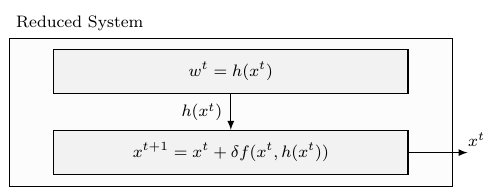}
	\caption{Block diagram of the reduced system~\eqref{eq:reduced_system_generic}.}
	\label{fig:bd_reduced_system}
\end{figure}

We will show next how our algorithms, namely \algo/ and \algoc/, can be recast in the form of the interconnected system \eqref{eq:interconnected_system_generic} while satisfying all assumptions of Theorem \ref{th:theorem_generic}, and hence prove their convergence in Theorems~\ref{th:convergence} and~\ref{th:convergence_primal_dual}, respectively, provided in the next sections.
Compared with traditional approaches, taking a singular perturbation view offers a novel proof line for (generalized) equilibrium-seeking problems.
\section{Aggregative games over networks\\ without coupling constraints}\label{sec:unconstrained_algorithm}

\subsection{\algo/}
\label{sec:unconstrained}

In this section, we introduce and analyze Primal TRacking-based Aggregative Distributed Equilibrium Seeking (TRADES), a fully-distributed iterative NE seeking algorithm for a special case of the aggregative game described by~\eqref{eq:problem_constrained}, i.e., where the local decision spaces are decoupled. 
Formally,
\begin{equation}\label{eq:problem}
	\forall i \in \mc I : \underset{\x_i \in \X_i}{\min} \; \Ji(\x_i, \sigma(\x)),
\end{equation}
where $\X_i \subseteq \R^{n_i}$, the local feasible set known to agent $i$ only, satisfies the following conditions:
\begin{assumption}\label{ass:local_feasible_set}
	For all $i \in \mc I$, the feasible set $\X_i$ is nonempty, closed, and convex.\QEDB
\end{assumption}
\begin{remark}
		The general structure of the aggregative variable $\sigma(x)$ in \eqref{eq:sigma} can accommodate ``soft'', possibly nonlinear, coupling constraints; these can be incorporated in the game~\eqref{eq:problem} by penalizing their residual in the players' cost functions.
        \oprocend
\end{remark}

Let $\xit \in \R^{n_i}$ be the strategy chosen by each agent $i$ at iteration $\iter \ge 0$. 
Taking its convex combination with a projected pseudo-gradient step may be an effective way to steer each agent's strategy to the best response $\xibr(\sigmai(\xmit))$. 
When applied to problem~\eqref{eq:problem}, it reads as
\begin{equation}\label{eq:desired_update}
    \xitp = \xit + \delta\left(\Pxi{\xit - \gamma \nabla_{\x_i}\Ji(\xit, \sigma(\xt))}-\xit\right),
\end{equation}
where $\delta \in (0,1)$ is a constant performing the combination and $\gamma > 0$ plays the role of the gradient step-size.
 We point out that the chain rule and the definition of $\sigma(\xt)$ (cf.~\eqref{eq:sigma}) lead to $\nabla_{\x_i}\Ji(\xit, \sigma(\xt)) = \nabla_1 J_i(\xit,\sigma(\xt)) + \frac{\nabla\phii(\xit)}{N}\nabla_2 \Ji(\xit,\sigma(\xt))$.
In our distributed setting, however, agent $i$ cannot access the global aggregate variable $\sigma(\xt)$. 
 To compensate for this lack of information, we rely on the locally available $\phii(\xit)$ and the auxiliary variable $\zit \in \R^{d}$. 
 Thus, for all $i \in \mc I$, let the operator $\tFi: \R^{n_i} \times \R^d \to \R^{n_i}$ be defined as
 \begin{equation*}
    \tFi(\x_i,s) \coloneqq \nabla_1 J_i(\x_i,s) + \frac{\nabla\phii(\x_i)}{N}\nabla_2 \Ji(x_i,s),
 \end{equation*}
 and, in accordance, we modify the update~\eqref{eq:desired_update} as 
\begin{equation}
    \xitp  =  \xit  +  \delta\left(P_{\X_i}\left[\xit - \gamma \tFi\left(\xit,\phii(\xit) + \zit\right)\right]-\xit\right),\label{eq:implementable_update}
\end{equation}
which can be directly implemented without violating the distributed nature of the algorithm. 
By comparing~\eqref{eq:desired_update} and~\eqref{eq:implementable_update}, we note that the global term $\sigma(\xt)$ has been replaced by the locally available proxy $\phii(\xit) + \zit$.
Therefore, if
\begin{equation}
    \zit \rightarrow -\phii(\xit) + \sigma(\xt), %
\end{equation}
then the implementable law~\eqref{eq:implementable_update} coincides with the desired one given in~\eqref{eq:desired_update}. 
Note that $\zit$ encodes the estimate of $\sigma(\xit) - \phii(\xit)$, i.e., the aggregate of all other agents' strategies except for the $i$-th one. %
For this reason, we update each auxiliary variable $\zit$ according to the following causal version of the perturbed average consensus scheme (see, e.g.,~\cite{carnevale2022nonconvex}, where a similar scheme has been used to locally compensate the missing knowledge of the global gradient of a distributed consensus optimization problem):
\begin{equation}
    \zitp = \sum_{j\in\cN_i}w_{ij}\zjt + \sum_{j\in\cN_i}w_{ij}\phij(\xjt) - \phii(\xit).\label{eq:z_local_update}
\end{equation}
This is implementable in a fully distributed fashion since it only requires communication with neighboring agents $j \in \cN_i$. 
We report the whole algorithmic structure in Algorithm~\ref{algo:unconstrained} and, from now on, we will refer to it as \algo/.
\begin{algorithm}[t!]
	\begin{algorithmic}
		\State \textbf{Initialization}: $\x_i^0 \in \X_i, \z_i^0 = 0$.
		\For{$t=1, 2, \dots$}
        \begin{subequations}\label{eq:local_update}
            \begin{align}
                \xitp &= \xit + \delta\left(\Pxi{\xit - \gamma\tFi\left(\xit,\phi_i(\xit) + \zit\right)} -\xit\right)
                \\
                \zitp &= \sum_{j\in\cN_i}w_{ij}\zjt + \sum_{j\in\cN_i}w_{ij}\phij(\xjt) - \phii(\xit).\label{eq:z_local}
            \end{align}
        \end{subequations}
		\EndFor
	\end{algorithmic}
	\caption{\algo/ (Agent $i$)}
	\label{algo:unconstrained}
\end{algorithm}
We note that Algorithm~\ref{algo:unconstrained} requires the initialization $\z_i^0 = 0$ for all $i \in \mc I$; we will discuss in the sequel the interpretation of this particular initialization. The local update~\eqref{eq:local_update} leads to the stacked vector form of \algo/, namely
\begin{subequations}\label{eq:global_update}
    \begin{align}
        \xtp &= \xt + \delta\big(P_{\X}\big[\xt - \gamma\tF\left(\xt,\phi(\xt) + \zt\right)\big] - \xt\big)
        \\
        \ztp &= \Wd\zt + (\Wd - I)\phi(\xt),
  \end{align}  
\end{subequations}
with $\X \coloneqq \prod_{i \in \mc I} \X_i \subseteq \R^n$, $\Wd \coloneqq \cW \otimes I_d \in \R^{Nd \times Nd}$, $\zt \coloneqq \col(\z_{1,\iter},\dots,\z_{N,\iter})$, $\phi(\xt) \coloneqq \col(\phi_1(\x_1^\iter),\dots,\phi_N(\x_N^\iter))$, and $\tF(\xt,\phi(\xt) + \zt) \coloneqq \col(\tilde{F}_1(\x_{1}^{\iter},\phi_1(\x_{1}^{\iter}) + z_{1}^{\iter}), \dots,
\tilde{F}_N(\x_{N}^{\iter},\phi_N(\x_{N}^{\iter}) + z_{N}^{\iter}))$.
We remark that, since $\F$ is $\mu$-strongly monotone (cf. Assumption~\ref{ass:objective_function}) and $\X$ nonempty, closed, and convex (cf. Assumption~\ref{ass:local_feasible_set}), there exists a unique Nash equilibrium $\xstar \in \R^{n}$ for~\eqref{eq:problem}. 
Moreover, for such an equilibrium it holds 
\begin{align*}
    \xstar = \Px{\xstar - \gamma \F(\xstar)},
\end{align*} 
for any $\gamma > 0$, see~\cite[Ch.~12]{facchinei2003finite}.
This result, in turn, guarantees that $\xstar =\xstar + \delta(\Px{\xstar - \gamma \F(\xstar)} - \xstar)$ for any $\delta > 0$.
We establish next the properties of \algo/ in computing the NE $\xstar$ of problem~\eqref{eq:problem}.
\begin{theorem}\label{th:convergence}
	Consider the dynamics in \eqref{eq:global_update} and Assumption~\ref{ass:local_feasible_set}.
	There exist constants $\bar{\delta}, \bar{\gamma}, a_1, a_2 > 0$ such that, for any $\delta \in (0,\bar{\delta})$, $\gamma \in (0,\bar{\gamma})$ and $(\x^0,\z^0) \in \R^{n+Nd}$ such that $\oned\T \z^0 = 0$, it holds 
	\begin{align*}
		\norm{\xt - \xstar} \leq a_1 e^{-a_2\iter}.\eqoprocend
	\end{align*}
\end{theorem}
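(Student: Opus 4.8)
The plan is to recognize~\eqref{eq:global_update} as an instance of the singularly perturbed system~\eqref{eq:interconnected_system_generic}, with $\x$ as the slow state, the tracking variable $\z$ (suitably reduced) as the fast state, and the convex-combination coefficient $\delta$ as the perturbation parameter; the claim then follows from Theorem~\ref{th:theorem_generic}. First I would determine the quasi-steady-state map $h$. Freezing $\x$ in the $\z$-update and solving $h = \Wd h + (\Wd - I)\phi(\x)$ gives $(I-\Wd)(h + \phi(\x)) = 0$. Since $\cG$ is strongly connected and $\cW$ is doubly stochastic (Standing Assumption~\ref{ass:network}), one has $\ker(I - \Wd) = \mathrm{span}(\oned)$, so $h + \phi(\x) = \oned c$ for some $c \in \R^d$. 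The key structural fact is that the update preserves $\oned\T\z$: indeed $\oned\T\Wd = \oned\T$ gives $\oned\T(\Wd - I) = 0$, whence $\oned\T\ztp = \oned\T\zt$, and the initialization $\oned\T\z^0 = 0$ therefore confines the trajectory to the disagreement subspace $\{\z : \oned\T\z = 0\}$. Imposing $\oned\T h = 0$ pins down $c = \sigma(\x)$ and yields $h(\x) = \oned\sigma(\x) - \phi(\x)$, i.e. componentwise exactly the target $\sigma(\x) - \phii(\x_i)$ anticipated in the derivation of the algorithm.

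For the reduced system I would substitute $\z = h(\x)$ into the slow update. The proxy then becomes exact, $\phi(\x) + h(\x) = \oned\sigma(\x)$, and the chain-rule identity $\tFi(\x_i,\sigma(\x)) = \nabla_{\x_i}J_i(\x_i,\sigma(\x))$ collapses $\tF(\x, \oned\sigma(\x))$ to the pseudo-gradient $\F(\x)$, so~\eqref{eq:reduced_system_generic} reduces to $\xtp = \xt + \delta(\Px{\xt - \gamma\F(\xt)} - \xt)$. To produce $W$ satisfying~\eqref{eq:W_generic} I would take $W(\x) = \norm{\x - \xstar}^2$ and show that $\x \mapsto \Px{\x - \gamma\F(\x)}$ is a contraction: strong monotonicity and Lipschitz continuity of $\F$ (Standing Assumption~\ref{ass:objective_function}) give $\norm{(\x - \gamma\F(\x)) - (\y - \gamma\F(\y))}^2 \leq (1 - 2\gamma\mu + \gamma^2 L^2)\norm{\x - \y}^2$ with $L$ the Lipschitz constant of $\F$, and nonexpansiveness of the projection onto the closed convex $\X$ (Assumption~\ref{ass:local_feasible_set}) preserves this bound. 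Combining this with the fixed-point identity $\xstar = \Px{\xstar - \gamma\F(\xstar)}$ yields the one-step decrease~\eqref{eq:W_minus_generic} for every $\gamma$ small enough that $1 - 2\gamma\mu + \gamma^2 L^2 < 1$, which fixes $\bar\gamma$; the bounds~\eqref{eq:W_first_bound_generic} and~\eqref{eq:W_bound_generic} are immediate for a quadratic $W$.

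The boundary-layer analysis is where the main difficulty lies. Since $h(\x)$ is the exact fixed point of the frozen fast map, the inhomogeneous terms cancel in~\eqref{eq:boundary_layer_system_generic} and the boundary layer reduces to $\psitp = \Wd\psit$. This is \emph{not} exponentially stable on all of $\R^{Nd}$, because $\Wd$ has a unit eigenvalue along $\oned$; thus the bounds~\eqref{eq:U_generic}, required for every $\psi$, cannot hold in the original coordinates. To resolve this I would recast the fast dynamics on the disagreement subspace: choosing a full-row-rank $R$ whose rows span the orthogonal complement of $\oned$ (so $R\oned = 0$ and $R\T R$ is the projection onto that subspace), I set $\pz = R\z$ and rewrite~\eqref{eq:global_update}, on the invariant subspace $\oned\T\z = 0$, using $\z = R\T\pz$. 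The fast dynamics becomes $\pztp = R\Wd R\T\pzt + R(\Wd - I)\phi(\xt)$, where $R\Wd R\T$ is Schur with spectral radius strictly below one since $\cW$ is primitive (strong connectivity plus self-loops $i \in \cN_i$). The reduced quasi-steady state is $h_\perp(\x) = R h(\x) = -R\phi(\x)$, and one checks $\phi(\x) + R\T h_\perp(\x) = \oned\sigma(\x)$, so the slow update is unchanged. I would then take $U(\pz) = \pz\T P \pz$ with $P \succ 0$ solving the discrete Lyapunov equation $(R\Wd R\T)\T P (R\Wd R\T) - P = -Q$ for some $Q \succ 0$; the Schur property makes $P$ well-defined and positive definite, and~\eqref{eq:U_generic} follows from standard quadratic-form estimates.

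Finally I would verify the remaining hypotheses of Theorem~\ref{th:theorem_generic}. Lipschitz continuity of the slow map $f(\x,\pz) = \Px{\x - \gamma\tF(\x, \phi(\x) + R\T\pz)} - \x$ follows from nonexpansiveness of the projection and the Lipschitz bounds on $\tF$ and $\phi$ in Standing Assumption~\ref{ass:objective_function}, while Lipschitz continuity of the affine fast map $g$ and of $h_\perp(\x) = -R\phi(\x)$ is immediate. With $U$, $W$, and the Lipschitz constants in hand, Theorem~\ref{th:theorem_generic} provides $\bar\delta, a_1, a_2 > 0$ such that $\norm{\col(\xt - \xstar,\, \pzt - h_\perp(\xt))}$ decays as $a_1 e^{-a_2\iter}$ times the fixed initial mismatch; discarding the fast component and absorbing the constant initial norm into $a_1$ gives $\norm{\xt - \xstar} \leq a_1 e^{-a_2\iter}$, as claimed. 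I expect the delicate point to be precisely the reduction to the disagreement subspace: one must confirm that the coordinate change $\pz = R\z$ transforms the \emph{full} interconnection into the template of Theorem~\ref{th:theorem_generic} while leaving the slow variable $\x$ intact, so that the unit eigenvalue of $\Wd$ is excised from the stability estimates and the global (rather than semi-global) exponential bound can be invoked.
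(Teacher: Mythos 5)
Your proposal is correct and follows essentially the same route as the paper: the same reduction of the tracking variable to the disagreement subspace via a matrix $R$ satisfying $R\oned=0$ (the paper's $\Rd\T$), the same quasi-steady-state map $h(\x)=-R\phi(\x)$, the same quadratic Lyapunov functions (discrete Lyapunov equation for the boundary layer, $\norm{\x-\xstar}^2$ with the projected-pseudo-gradient contraction for the reduced system), and the same invocation of Theorem~\ref{th:theorem_generic}. The point you flag as delicate --- excising the unit eigenvalue of $\Wd$ before the boundary-layer estimates --- is exactly how the paper handles it, and your justification of the Schur property of $R\Wd R\T$ is, if anything, slightly more explicit than the paper's.
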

The proof of Theorem~\ref{th:convergence} relies on a \emph{singular perturbation} analysis of system~\eqref{eq:global_update}, and will be given in the next subsection.
\subsection{Proof of Theorem \ref{th:convergence}}
\label{sec:SP}

We build the framework to prove Theorem~\ref{th:convergence} by analyzing \eqref{eq:global_update} under a singular perturbations lens. 
We therefore establish the related proof in five steps:
\emph{1. Bringing \eqref{eq:global_update} in the form of \eqref{eq:interconnected_system_generic}:} we leverage the initialization $\z^0$ so that $\oned\T\z^0 = 0$ to introduce coordinates $\bz \in \R^d$ and $\pz \in \R^{(N-1)d}$ defined as:
    \begin{align}\label{eq:change_mean}
        \begin{bmatrix}
            \bz
            \\
            \pz
        \end{bmatrix} \coloneqq \begin{bmatrix}
            \frac{\oned\T}{N}\\
            \Rd\T
        \end{bmatrix}\z \implies \z = \oned\bz + \Rd\pz,
    \end{align}
    where $\Rd \in \R^{Nd \times (N-1)d}$ with $\norm{\Rd} = 1$ is such that
    \begin{align}
    \Rd \Rd\T = I- \frac{\oned\oned\T}{N}~\text{ and }~ \Rd\T\oned = 0. \label{eq:eq_Rd}
    \end{align}
    Then, by using the definition of $\bz$ given in~\eqref{eq:change_mean}, the associated dynamics reads as
    \begin{align}
        \bztp &= \frac{\oned\T}{N}\ztp
        \stackrel{(a)}{=}
        \frac{\oned\T}{N}\Wd\zt + \frac{\oned\T}{N}(\Wd - I)\phi(\xt) 
        \notag\\
        &
        \stackrel{(b)}{=} 
        \frac{\oned\T}{N}\zt
        \stackrel{(c)}{=} 
        \frac{\oned\T}{N}\left(\oned \bzt + \Rd\pzt\right) 
        \stackrel{(d)}{=} \bzt,\label{eq:barz_update}
    \end{align} 
    where in $(a)$ we exploit the update~\eqref{eq:global_update}, in $(b)$ we use the facts that, in view of Standing Assumption~\ref{ass:network}, (i) $\oned\T \Wd = \oned\T$ and (ii) $\oned\T (\Wd - I) = 0$, in $(c)$ we rewrite $\zt$ according to~\eqref{eq:change_mean}, and in $(d)$ we use the fact that $\oned\T \Rd = 0$.
    Thus,~\eqref{eq:barz_update} leads to $\bztp \equiv \bz^0 \equiv 0$ for all $\iter \ge 0$, where the last equality follows by the initialization $\oned\T z^0 = 0$ and the definition of $\bz$ (cf.~\eqref{eq:change_mean}). 
    We are thus entitled to ignore the null dynamics of $\bzt$ and, according to~\eqref{eq:change_mean}, we equivalently rewrite~\eqref{eq:global_update} as
    \begin{subequations}\label{eq:global_update_mean}
        \begin{align}
            \xtp &= \xt  +  \delta\big(P_{\X}[\xt - \gamma \tF(\xt,\phi(\xt) + \Rd\pzt)]-\xt \big)\label{eq:x_update_mean}
            \\
            \pztp &= \Rd\T \Wd\Rd\pzt + \Rd\T(\Wd - I)\phi(\xt).\label{eq:z_update_mean}
      \end{align}  
    \end{subequations}
    For any $\iter \geq0$, the interconnected system~\eqref{eq:global_update_mean} can thus be obtained from \eqref{eq:interconnected_system_generic} by setting 
    \begin{align}\label{eq:assign}
        \begin{split}
            w^t &:= \pzt
            \\
            f(\xt,\wt) &:= \Px{\xt - \gamma \tF(\xt,\phi(\xt) + \Rd\wt)}-\xt 
            \\
            g(\wt,\xt) &:= \Rd\T \Wd\Rd\wt + \Rd\T(\Wd - I)\phi(\xt). 
        \end{split}
    \end{align}
    In particular, we refer to the subsystem~\eqref{eq:x_update_mean} as the slow system, while we refer to~\eqref{eq:z_update_mean} as the fast one. 

    \emph{2. Equilibrium function $h~$:} under the expression for $\Rd\Rd\T$ in~\eqref{eq:eq_Rd} and since $\cW$ is doubly stochastic (cf. Standing Assumption~\ref{ass:network}) notice that for any $\xt = \x \in \R^n$,
    \begin{equation}\label{eq:man_definition}
        \pz = h(\x) \coloneqq -\Rd\T\phi(\x)
    \end{equation}
    constitutes an equilibrium of~\eqref{eq:z_update_mean}. %
    Since $\Rd\T \Wd\Rd$ is Schur in view of Standing Assumption~\ref{ass:network}, we interpret~\eqref{eq:z_update_mean} as a strictly stable linear system with nonlinear input $\Rd\T(\Wd - I)\phi(\xt)$ parametrizing the equilibrium of the subsystem. 
    The role of $\gamma$ is to slow down the variation of $\xt$ so that $\pz$ always remains close to the parametrized equilibrium $h(\xt)$.

    \emph{3. Boundary layer system and satisfaction of \eqref{eq:U_generic}:} 
    the so-called boundary layer system associated to~\eqref{eq:global_update_mean} can be constructed by fixing $\xt = x$ for all $t\geq0$, for some arbitrary $x \in \mathbb{R}^n$ in \eqref{eq:z_update_mean}, and rewriting it according to the error coordinates $\tzt := \pzt - h(\x)$. 
    Using \eqref{eq:eq_Rd}, we obtain that
    \begin{equation}\label{eq:bl}
        \tztp = \Rd\T \Wd\Rd\tzt.
    \end{equation}
    Notice that the latter is in the form of \eqref{eq:boundary_layer_system_generic} with $\psi =\tzt$, and $g(\psi  +  h(\x),\x) - h(\x) = \Rd\T \Wd\Rd\tzt$.
    The next lemma provides a Lyapunov function for \eqref{eq:bl}.
    \begin{lemma}\label{lemma:bl}
        Consider system~\eqref{eq:bl}. Then, there exists a continuous function $U :\R^{(N-1)d} \to \R$ satisfying \eqref{eq:U_generic} with $\tz$ in place of $\psi$.\QEDB
    \end{lemma}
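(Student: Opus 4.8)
The plan is to observe that the boundary layer system \eqref{eq:bl} is a \emph{linear} time-invariant recursion driven by the Schur matrix $M \coloneqq \Rd\T \Wd\Rd$, so a quadratic Lyapunov function suffices. Since $M$ is Schur (as already noted, in view of Standing Assumption~\ref{ass:network}), the discrete-time Lyapunov equation $M\T P M - P = -I$ admits a unique solution $P = P\T \succ 0$. I would then propose the candidate $U(\tz) \coloneqq \tz\T P \tz = \norm{\tz}_P^2$, which is continuous, and verify the three requirements in \eqref{eq:U_generic} one at a time, with $\tz$ in place of $\psi$.

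For the quadratic sandwich \eqref{eq:U_first_bound_generic}, positive definiteness of $P$ immediately yields $\lambda_{\min}(P)\norm{\tz}^2 \leq U(\tz) \leq \lambda_{\max}(P)\norm{\tz}^2$, so one takes $b_1 = \lambda_{\min}(P)$ and $b_2 = \lambda_{\max}(P)$. For the decrease condition \eqref{eq:U_minus_generic}, I would use the identity recorded just below \eqref{eq:bl}, namely that the boundary layer map equals $M\tz$; substituting into $U$ and invoking the Lyapunov equation gives
\begin{align*}
U(M\tz) - U(\tz) = \tz\T(M\T P M - P)\tz = -\norm{\tz}^2,
\end{align*}
so the bound holds with $b_3 = 1$, uniformly in $\delta$ (note that $M$ does not depend on $\delta$). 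For the final Lipschitz-type estimate \eqref{eq:U_bound_generic}, I would expand the difference via the telescoping identity $\tz_1\T P\tz_1 - \tz_2\T P\tz_2 = \tz_1\T P(\tz_1 - \tz_2) + (\tz_1 - \tz_2)\T P\tz_2$ and apply Cauchy--Schwarz together with $\norm{P} = \lambda_{\max}(P)$, obtaining the claimed two-term form with $b_4 = \lambda_{\max}(P)$.

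I do not expect any substantial obstacle here: the entire construction reduces to standard discrete-time Lyapunov theory for a stable linear system. The only genuinely non-trivial ingredient is the Schur property of $M = \Rd\T \Wd\Rd$, which follows from the double stochasticity and strong connectivity in Standing Assumption~\ref{ass:network} and is taken as given in the excerpt. The one point requiring a small amount of care is to confirm that all constants $b_1,\dots,b_4$ can be chosen independently of $\delta$; this is immediate, since the boundary layer dynamics \eqref{eq:bl}, and hence the solution $P$, do not involve $\delta$ at all.
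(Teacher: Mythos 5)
Your proposal is correct and follows essentially the same route as the paper: a quadratic Lyapunov function $U(\tz)=\tz\T P\tz$ with $P$ solving the discrete Lyapunov equation for the Schur matrix $\Rd\T\Wd\Rd$, the sandwich bounds from the extreme eigenvalues of $P$, the decrease from the Lyapunov equation (the paper uses a general $Q\succ 0$ where you fix $Q=I$, an immaterial difference), and the growth bound via the same telescoping-plus-Cauchy--Schwarz argument with $b_4=\lambda_{\max}(P)$. Your added remark that all constants are independent of $\delta$ is accurate and consistent with the paper.
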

    \emph{4. Reduced system and satisfaction of \eqref{eq:W_generic}:}
   the so-called reduced system can be obtained by plugging into \eqref{eq:x_update_mean} the fast state at its steady state equilibrium, i.e., we consider $\zt = h(\xt)$ for any $t \ge 0$. We thus have
   \begin{equation}\label{eq:rs}
        \xtp = \xt + \delta\left(\Px{\xt - \gamma \tF(\xt,\phi(\xt) + \Rd h(\xt))}-\xt\right).
   \end{equation}
    Due to~\eqref{eq:eq_Rd} we have that $\tF(\xt,\phi(\xt) + \Rd h(\xt)) = \tF(\xt,\oned\sigma(\xt)) = F(\xt)$, so~\eqref{eq:rs} is equivalent to 
    \begin{equation}\label{eq:rs_explicit}
        \xtp = \xt + \delta\left(\Px{\xt - \gamma F(\xt)}-\xt\right).
    \end{equation}
    The next lemma provides a Lyapunov function for \eqref{eq:rs}.%
    \begin{lemma}\label{lemma:rs}
        Consider system~\eqref{eq:rs}. 
        Let $\xstar \in \mathbb{R}^n$ be such that $f(\xstar,h(\xstar)) = 0$ with $f$ defined as in~\eqref{eq:assign}.
        Then, there exist a continuous function $W: \R^n \to \R$ and $\bar{\gamma}, \bar{\delta}_2 >0$ such that, for any $\gamma \in (0,\bar{\gamma})$ and $\delta \in (0,\bar{\delta}_2)$, $W$ satisfies~\eqref{eq:W_generic}. \QEDB
    \end{lemma}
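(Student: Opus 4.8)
The plan is to take the natural quadratic Lyapunov candidate $W(\x) := \norm{\x - \xstar}^2$, which is continuous, and to verify the three requirements in~\eqref{eq:W_generic} directly. The bounds~\eqref{eq:W_first_bound_generic} and~\eqref{eq:W_bound_generic} are essentially free: the former holds with $c_1 = c_2 = 1$, while for the latter one writes $W(\x_1) - W(\x_2) = (\x_1 - \x_2)\T(\x_1 + \x_2 - 2\xstar)$ and applies the triangle and Cauchy--Schwarz inequalities to obtain~\eqref{eq:W_bound_generic} with $c_4 = 1$. All the substance therefore lies in the descent inequality~\eqref{eq:W_minus_generic}.

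To address it, I would first recall, as already noted below~\eqref{eq:rs}, that $f(\x, h(\x)) = \Px{\x - \gamma \F(\x)} - \x =: T_\gamma(\x) - \x$, so the reduced update is the relaxation $\x + \delta(T_\gamma(\x) - \x)$. Since $\X$ is closed and convex (Assumption~\ref{ass:local_feasible_set}), the projection $P_\X$ is nonexpansive; since $\F$ is $\mu$-strongly monotone and Lipschitz continuous with some constant $L > 0$ (both consequences of Standing Assumption~\ref{ass:objective_function}), the residual map $\x \mapsto \x - \gamma \F(\x)$ is a contraction, so the composition $T_\gamma$ is a contraction with modulus $\rho(\gamma) := \sqrt{1 - 2\gamma\mu + \gamma^2 L^2}$. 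Setting $\bar{\gamma} := 2\mu/L^2$ then forces $\rho(\gamma) \in (0,1)$ for every $\gamma \in (0,\bar{\gamma})$. Crucially, because $\xstar = \Px{\xstar - \gamma\F(\xstar)}$ is a fixed point of $T_\gamma$, the contraction is \emph{toward} $\xstar$, i.e. $\norm{T_\gamma(\x) - \xstar} \leq \rho(\gamma)\norm{\x - \xstar}$.

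Writing $e := \x - \xstar$ and $d := T_\gamma(\x) - \x$, expanding the quadratic gives $W(\x + \delta d) - W(\x) = 2\delta\, e\T d + \delta^2 \norm{d}^2$. Decomposing $d = (T_\gamma(\x) - \xstar) - e$ and combining Cauchy--Schwarz with the contraction bound yields $e\T d \leq \rho(\gamma)\norm{e}^2 - \norm{e}^2 = -(1-\rho(\gamma))\norm{e}^2$, while the triangle inequality gives $\norm{d} \leq (1 + \rho(\gamma))\norm{e} \leq 2\norm{e}$. Substituting,
\[
W(\x + \delta d) - W(\x) \leq -\delta\big(2(1 - \rho(\gamma)) - 4\delta\big)\norm{e}^2 ,
\]
so choosing $\bar{\delta}_2 := (1-\rho(\gamma))/4$ keeps the bracket at least $1 - \rho(\gamma) > 0$ for all $\delta \in (0,\bar{\delta}_2)$, establishing~\eqref{eq:W_minus_generic} with $c_3 := 1 - \rho(\gamma) > 0$.

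The one genuinely technical point — and the step I expect to be the main obstacle — is extracting an explicit Lipschitz modulus $L$ for $\F$ from Standing Assumption~\ref{ass:objective_function}, since $\F$ was only assumed strongly monotone there. Because each block satisfies $F_i(\x) = \tFi(\x_i, \phii(\x_i)/N + \sigmai(\xmi))$, its variation is controlled by the first gradient-Lipschitz bound in the pair $(\x_i, \sigmai(\xmi))$, and the dependence on the remaining strategies is in turn bounded through the $\beta_3$-Lipschitz continuity of the aggregation rules; collecting these contributions produces a constant $L = L(\beta_1,\beta_3,N)$. Once $L$ is in hand, the contraction estimate used above is standard (cf.~\cite[Ch.~12]{facchinei2003finite}), and the remaining manipulations are routine algebra.
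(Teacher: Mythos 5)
Your proposal is correct and follows essentially the same route as the paper: the quadratic candidate $\norm{\x-\xstar}^2$ (the paper uses $\tfrac{1}{2}\norm{\x-\xstar}^2$), nonexpansiveness of $P_\X$ combined with the contraction of $\x \mapsto \x - \gamma F(\x)$ for $\gamma \in (0,2\mu/L^2)$ (which the paper isolates as Lemma~\ref{lemma:contraction}, with $L=\beta_1$), and then an expansion of the square to absorb the $\delta^2$ term for $\delta$ small. Your grouping via $e\T d$ yields a slightly more conservative $\bar{\delta}_2$ than the paper's exact cancellation $-\delta\bar{\mu}(1-\delta\bar{\mu}/2)\norm{\xt-\xstar}^2$, but this is immaterial, and your remark on tracing the Lipschitz constant of $F$ through $\beta_1,\beta_3,N$ is if anything more careful than the paper, which simply invokes $\beta_1$.
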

\emph{5. Lipschitz continuity of $f$, $g$ and $h$:}
as we will be invoking Theorem~\ref{th:theorem_generic}, we need to ensure that the Lipschitz continuity assumptions required by the theorem are satisfied. In particular, we require $f$ and $g$ in \eqref{eq:assign} to be Lipschitz continuous with respect to both arguments and $h$ in \eqref{eq:man_definition} to be Lipschitz continuous with respect to $x$.

Lipschitz continuity of $f$ follows by the fact that $\nabla J_i$ is Lipschitz continuous due to Standing Assumption~\ref{ass:objective_function}. To show Lipschitz continuity of $g$ in \eqref{eq:assign} notice that for any $\w, \w^\prime \in \R^{(N-1)d}$ and any $\x, \x^\prime \in \R^{n}$,
\begin{align*}
    &\norm{\Rd\T \Wd\Rd (\w - \w^\prime) + \Rd\T(\Wd - I)(\phi(\x) -\phi(\x^\prime))} 
     \notag\\
    &\leq \norm{\Rd\T \Wd\Rd}\norm{\w - \w^\prime} 
    + \beta_3\norm{\Rd\T(\Wd - I)}\norm{\x - \x^\prime},%
\end{align*}
where the inequality is due to triangle inequality and the fact that by Standing Assumption~\ref{ass:objective_function}, $\phi$ is Lipschitz continuous with Lipschitz constant $\beta_3$. To show Lipschitz continuity of $h$, notice that for any $\x, \x^\prime \in \R^n$,
    \begin{equation*}
    \norm{h(\x) - h(\x^\prime)} \leq \beta_3 \|\Rd\| \norm{\x - \x^\prime} = \beta_3 \norm{\x - \x^\prime},%
\end{equation*}
where the inequality follows from \eqref{eq:man_definition} and Lipschitz continuity of $\phi$, while the equality from the fact that $\|\Rd\| = 1$.

    By combining Lemma~\ref{lemma:bl} and \ref{lemma:rs} with the Lipschitz conditions expressed above, Theorem~\ref{th:theorem_generic} can therefore be applied. 
Thus, there exists $\bar{\delta} \in (0,\bar{\delta}_2)$ so that $(\xstar,h(\xstar))$ is an exponentially stable equilibrium for~\eqref{eq:global_update_mean}.

\section{Generalized Nash equilibrium problems\\ in aggregative form}
\label{sec:constrained}

\subsection{\algoc/}

In this section, we introduce the \algoc/ algorithm, i.e., a distributed iterative methodology to find a GNE in aggregative games with affine coupling constraints as formalized in~\eqref{eq:problem_constrained}.

In addition to the assumptions made in Section~\ref{sec:setup}, we need some further conditions for our mathematical developments.
\begin{assumption}[Feasibility]
	\label{ass:feasible_set}
	The set $\cC$ is nonempty.%
	\QEDB
\end{assumption}
Note that the condition $\cC \ne \emptyset$ is weaker than Slater's constraint qualification required by many results in the literature.
However, to establish linear convergence of our distributed algorithm, we will enforce an additional assumption on the matrix $A$ (see Assumption~\ref{ass:feasible_set_primal_duality}).
Consider the following variational inequality, defined by the mapping $F$ in \eqref{eq:F_definition} over the domain $\cC$: %
\begin{equation}\label{eq:VI_con}
    F(\xstar)\T (\x - \xstar) \ge 0, ~\text{ for all } \x \in \cC.
\end{equation}
It is known that every point $\xstar \in \cC$ for which~\eqref{eq:VI_con} holds is a GNE of the game~\eqref{eq:problem_constrained} and, specifically, a \emph{variational} GNE (v-GNE) (cf.~\cite[Th.~3.9]{FacchineiKanzowGNE2010}).
The converse, however, does not hold in general.
However, since $F$ is strongly monotone (cf. Standing Assumption~\ref{ass:objective_function}) and $\cC$ is nonempty (cf. Assumption~\ref{ass:feasible_set}), closed and convex (since the constraint are in the form $Ax \leq b$), Prop.~1.4.2 and Th.~2.3.3 in~\cite{facchinei2003finite} guarantee that a unique v-GNE exists, and this satisfies \eqref{eq:VI_con}.

In the following, we devise an iterative algorithm that asymptotically returns the (unique) v-GNE of~\eqref{eq:problem_constrained}. 
Inspired by~\cite{qu2018exponential}, where an augmented primal-dual scheme was used for continuous-time, centralized optimization, we require the following additional condition on the matrix $A$ which characterizes the coupling constraints (cf.~\eqref{eq:coupl_constr}):
    \begin{assumption}[Full-row rank]
        \label{ass:feasible_set_primal_duality}
        There exist $\kappa_1, \kappa_2 > 0$ such that $\kappa_1 I_m \preccurlyeq AA\T \preccurlyeq \kappa_2 I_m$.
        \QEDB
    \end{assumption}
    We note that Assumption~\ref{ass:feasible_set_primal_duality} imposes, as a necessary condition, the fact that $m \leq n$, i.e., that the number of constraints is at most equal to the total number of components of the global strategy vector.

    Following~\cite{qu2018exponential}, for all $i \in \mc I$ we consider the augmented Lagrangian function $L_i: \R^{n} \times\R^m \to \R$ defined as 
    \begin{equation}\label{eq:lagrangian}
        L_i(\x,\lambda) \coloneqq \Ji(\x_i,\sigma(\x)) + \underbrace{\sum_{\ell=1}^m H_\ell([A\x - b]_\ell,[\lambda]_\ell)}_{\eqqcolon H(A\x - b,\lambda)},
    \end{equation}
	where
    \begin{align*}
        &H_\ell([A\x - b]_\ell,[\lambda]_\ell) \coloneqq
        \begin{cases}
        \hspace{-.01cm}[A\x - b]_\ell[\lambda]_\ell + \frac{\rho}{2}([A\x - b]_\ell)^2 \hspace{.23cm} \text{if} \hspace{.09cm} \rho[A\x - b]_\ell + [\lambda]_\ell \ge 0
        \\
        \hspace{-.01cm}-\frac{1}{2\rho}[\lambda]_\ell^2 \hspace{3.44cm} \text{if} \hspace{.09cm} \rho[A\x - b]_\ell + [\lambda]_\ell < 0,
        \end{cases}
    \end{align*}
    with $\lambda \in \R^m$ being the multiplier associated to the coupling constraints, and $\rho > 0$ a constant. We therefore address the v-GNE seeking problem by obtaining a saddle point of~\eqref{eq:lagrangian} through the discrete-time dynamics:
    \begin{subequations}\label{eq:local_desired_primal_dual_implicit}
    \begin{align}
        \xitp  &=  \xit  -  \delta\left(\nabla_{\x_i}\Ji(\xit, \sigma(\xt))  +  \nabla_{\x_i} H(A\xt -  b,\lt)\right)
        \\
        \ltp  &=  \lt +\delta\nabla_{\lambda} H(A\xt - b,\lt),
    \end{align}
    \end{subequations}
    where $\xit$ and $\delta$ have the same meaning as in~\eqref{eq:desired_update}, $\lt \in \R^m$ is the multiplier at $\iter \ge 0$, and the explicit form of the gradients $\nabla_{\x_i} H(A\xt - b,\lt)$ and $\nabla_{\lambda} H(A\xt - b,\lt)$ reads as
    \begin{subequations}\label{eq:expl_nablaH}
    \begin{align}
        \nabla_{\x_i} H(A\xt - b,\lt) &= \sum_{\ell=1}^m\nabla_{\x_i}H_\ell([A\xt - b]_\ell,[\lt]_\ell) 
        \notag\\
        &= \sum_{\ell=1}^m \max\left\{\rho[A\xt - b]_\ell + [\lt]_\ell,0\right\}[A_i]_\ell\T %
        \\
        \nabla_{\lambda} H(A\xt - b,\lt)
        &= \sum_{\ell=1}^m \nabla_{\lambda}H_\ell([A\xt - b]_\ell,[\lt]_\ell) 
        \notag\\
        &= \sum_{\ell=1}^m \frac{1}{\rho} e_\ell (\max\left\{\rho[A\xt - b]_\ell + [\lt]_\ell,0\right\} - [\lt]_\ell),
    \end{align}
\end{subequations}
	where $e_\ell \in \R^m$ is the $\ell$-th vector of the canonical basis of $\R^m$, $\ell \in \{1, \ldots, m\}$. The stacked-column form of~\eqref{eq:local_desired_primal_dual_implicit} is 
    \begin{subequations}\label{eq:desired_primal_dual_implicit}
        \begin{align}
            &\xtp = \xt- \delta\left( F(\x) + \nabla_{\x} H(A\xt - b,\lt)\right)
            \\
            &\ltp = \lt +\delta \nabla_{\lambda} H(A\xt - b,\lt),
        \end{align}
        \end{subequations}
        where $\nabla_{\x} H(A\xt - b,\lt) \coloneqq \col(\nabla_{\x_1} H(A\xt - b,\lt),\dots,\nabla_{\x_N} H(A\xt - b,\lt))$.
    By computing the KKT conditions of the VI~\eqref{eq:VI_con} and using~\cite[Prop.~1]{qu2018exponential}, we obtain that the v-GNE $\xstar$ and the corresponding (unique) optimal multiplier $\lstar \in \R^m$ are such that 
    \begin{subequations}\label{eq:equilibrium_primal_dual}
        \begin{align}
            0 &= F(\xstar) + \nabla_{\x} H(A\xstar - b,\lstar)
            \\
            0 &= \nabla_{\lambda} H(A\xstar - b,\lstar).
        \end{align}
    \end{subequations}  
    The above result ensures that $\col(\xstar,\lstar)$ represents an equilibrium point of~\eqref{eq:desired_primal_dual_implicit} for any $\delta > 0$.
    
    However, since agent $i$ does not have access neither to $\sigma(\xt)$ nor to $A\xt - b$, the scheme in~\eqref{eq:local_desired_primal_dual_implicit} cannot be directly implemented.
    Moreover, dynamics~\eqref{eq:local_desired_primal_dual_implicit} requires a central unit that can compute the global quantity $A\xt - b$ and communicate the multiplier $\lt$ to all the agents. 
    For this reason, in Algorithm~\ref{algo:constrained} we introduce for all $i \in \mc I$ (i) two additional variables $\z_i \in \R^d$ and $\y_i \in \R^m$ to compensate the local unavailability of $\sigma(\xt)$ and $A\xt - b$, respectively, (ii) a local copy $\lambda_i \in \R^m$ of the multiplier $\lambda$, and (iii) an additional averaging step to enforce consensus among the multipliers $\lambda_i$ (cf.~\eqref{eq:l_local_constrained_primal_dual}-\eqref{eq:y_local_constrained_primal_dual}).
    As already done in~\eqref{eq:z_local_update}, we choose causal perturbed consensus dynamics to update $\z_i$ and $\y_i$. For all $i \in \mc I$, we then introduce operators $\Gxi: \R^{m} \times \R^m \to \R^{n_i}$ and $\Gli: \R^{m} \times \R^m \to \R^m$ as
\begin{align}\label{eq:eqG}
    \begin{split}
	\Gxi(s_1,s_2) &\coloneqq \sum_{\ell=1}^m \textrm{max}\{\rho[s_1]_\ell + [s_2]_\ell,0\}[A_i]\T_\ell %
	\\
	\Gli(s_1,s_2) &\coloneqq \frac{1}{\rho}\sum_{\ell=1}^m \left(  \textrm{max}\{\rho[s_1]_\ell + [s_2]_\ell,0\} - [s_2]_\ell \right) e_\ell.   
   \end{split}
\end{align}
In Algorithm \ref{algo:constrained}, these operators encode the component of the gradients in \eqref{eq:expl_nablaH} available to agent $i$ at iteration $t$, plus the auxiliary variable $\yit$ that is used to track $A\xt - b$ (see \eqref{eq:x_local_constrained_primal_dual} and \eqref{eq:l_local_constrained_primal_dual} in Algorithm \ref{algo:constrained}).
The steps of the proposed method are hence summarized in Algorithm~\ref{algo:constrained} from the perspective of agent $i$, which is then referred as \algoc/. 
Note that all the quantities involved in the agent's calculations are purely local, thus making Algorithm~\ref{algo:constrained} fully distributed.
    \begin{algorithm}[t!]
        \begin{algorithmic}
            \State \textbf{Initialization}: $\x_i^0 \in \X_i, \lit \in \R^m_{+}, \z_i^0 = 0, \y_i^0 = 0$.
            \For{$t=0, 1, \dots$}
                \begin{subequations}\label{eq:local_primal_dual}
                    \begin{align}
                        \xitp &= \xit - \delta(\tFi(\xit, \phii(\xit) + \zit) + (N(A_i\xit - b_i) + \yit,\lit))
                         \label{eq:x_local_constrained_primal_dual}
                        \\
                        \litp &= \sum_{j \in \cN_i} w_{ij}\ljt +\delta \Gli(N(A_i\xit - b_i) + \yit,\lit)
                        \label{eq:l_local_constrained_primal_dual}
                        \\
                        \zitp &= \sum_{j\in\cN_i}w_{ij}\zjt + \sum_{j\in\cN_i}w_{ij}\phij(\xjt) - \phii(\xit)\label{eq:z_local_constrained_primal_dual}
                        \\
                        \yitp &= \sum_{j\in\cN_i}w_{ij}\yjt + \sum_{j\in\cN_i}w_{ij}N(A_j\xjt - b_j) 
                        - N(A_i\xit - b_i),\label{eq:y_local_constrained_primal_dual}
                    \end{align}
                \end{subequations}
            \EndFor
        \end{algorithmic}
        \caption{\algoc/ (Agent $i$)}
        \label{algo:constrained}
    \end{algorithm}
    Differently from customary primal-dual schemes, \eqref{eq:l_local_constrained_primal_dual} does not need the projection over the positive orthant $\R^m_{+}$ due to the chosen augmented Lagrangian functions $L_i$~\eqref{eq:lagrangian}.
    We only need to initialize $\lambda_i^0 \ge 0$ for all $i \in \mc I$, and choose $\delta$ and $\rho$ appropriately so that we avoid situations where $\lit \geq 0$ implies $\litp < 0$. 
    To see this notice first that if $\lit = 0$, then it is easy to check $\Gli(N(A_i\xit - b_i) + \yit,\lit) \ge 0$ and, thus, $\litp \ge 0$.
    The critical scenario for agent $i$ occurs when all the multipliers of its neighbors are zero, namely $\ljt = 0$ for any $j \in \cN_i$, and when $\max\{\rho[N(A_i\xit - b_i) + \yit]_{\ell} + [\lit]_\ell, 0\} = 0$ for at least one $\ell \in \until{m}$.
    Indeed, specializing \eqref{eq:l_local_constrained_primal_dual} for this case leads to the following update of that $\ell$-th component of $\lit$
    \begin{align}
        [\litp]_{\ell} = \left(w_{ii} - \frac{\delta}{\rho}\right)[\lit]_{\ell}.\label{eq:critical_update}
    \end{align}
    From~\eqref{eq:critical_update}, we conclude that $[\litp]_{\ell} $ remains non-negative if $[\lit]_{\ell}$ is non-negative, thus alleviating the need for a projection, as long as $\delta$ and $\rho$ satisfy $w_{ii} > \delta/\rho$.
    This feature plays a key role in proving exponential stability properties for the continuous-time, centralized primal-dual scheme proposed in~\cite{qu2018exponential}.
    As in the case without coupling constraints, the purpose of the initialization step will become clear in the next subsection. The steps of Algorithm \ref{algo:constrained} in \eqref{eq:local_primal_dual} can be compactly written as:
    \begin{subequations}\label{eq:algo_primal_dual}
        \begin{align}
            \xtp &= \xt + \delta f_{x}(\xt,\lt,\zt,\yt)
            \\
            \ltp &= \Wm\lt + \delta\Gl(N(\bar{A}\xt - \bar{b}) + \yt, \lt)
            \\
            \ztp &= \Wd\zt + (\Wd - I)\phi(\xt)
            \\
            \ytp &= \Wm\yt + (\Wm - I)N(\bar{A}\xt - \bar{b}).
        \end{align}
    \end{subequations}
    where $\map{f_{x}}{\R^{n} \times \R^{Nm} \times \R^{Nd} \times \R^{Nm}}{\R^{n}}$ is defined as
    \begin{align*}
        f_{x}(\x,\lambda,\z,\y)\coloneqq -\tF(\x, \phi(\x) + \z) - \Gx(N(\bar{A}\x - \bar{b}) + \y, \lambda),
    \end{align*}
and, similarly to~\eqref{eq:global_update}, $\lambda : \col(\lambda_1,\dots,\lambda_N)$, $\Wd \coloneqq \cW \otimes I_d$, $\Wm \coloneqq \cW \otimes I_m$, $\Gx(N(\bar{A}\xt - \bar{b}) + \yt, \lt) \coloneqq \col(G_{\x,1}(N(A_1\x_1^\iter - b_1) + \y_1^\iter,\lambda_1^\iter),\dots,G_{\x,N}(N(A_N\x_N^\iter - b_N) + \y_N^\iter,\lambda_N^\iter))$, and $\Gl(N(\bar{A}\xt - \bar{b}) + \yt, \lt) \coloneqq \col(G_{\lambda,1}(N(A_1\x_1^\iter - b_1) + \y_1^\iter,\lambda_1^\iter),\dots,G_{\lambda,N}(N(A_N\x_N^\iter - b_N) + \y_N^\iter,\lambda_N^\iter))$.
    Next, we establish the convergence properties of \algoc/ in computing the v-GNE of~\eqref{eq:problem_constrained}.
    \begin{theorem}\label{th:convergence_primal_dual}
	Consider~\ref{eq:algo_primal_dual} and Assumptions~\ref{ass:feasible_set}, \ref{ass:feasible_set_primal_duality}. Let $(\x^0,\lambda^0,\z^0,\y^0) \in \X \times \R^{Nm}_+ \times \R^{Nd} \times \R^{Nm}$ satisfy $\oned\T \z^0 = 0$ and $\onem\T \y^0 = 0$. Then, there exist $\bar{\delta}, a_1, a_2 > 0$ such that, for any $\delta \in (0,\bar{\delta})$, with $w_{ii} > \frac{\delta}{\rho}$ for all $i \in \until{N}$, it holds
	\begin{align*}
		\norm{\xt - \xstar} \leq a_1 e^{-a_2\iter}.\eqoprocend
	\end{align*}
	\end{theorem}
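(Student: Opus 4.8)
The plan is to recast the stacked recursion~\eqref{eq:algo_primal_dual} as a singularly perturbed interconnection of the form~\eqref{eq:interconnected_system_generic} and invoke Theorem~\ref{th:theorem_generic}, following the five-step template of the proof of Theorem~\ref{th:convergence} but now with a \emph{two-component} slow state. First I would exploit the initializations $\oned\T\z^0 = 0$ and $\onem\T\y^0 = 0$ to show, exactly as in~\eqref{eq:barz_update}, that the averages $\bz^\iter$ and $\bar{\y}^\iter$ stay identically zero; the tracking variables thus live on the consensus-orthogonal subspaces and are represented by $\pzt = \Rd\T\zt$ and $\pyt = \Rm\T\yt$. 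The multiplier is split as $\lambda = \onem\blt + \Rm\plt$ through $\Rm$. Using double stochasticity (Standing Assumption~\ref{ass:network}), the average $\blt$ obeys the \emph{slow} update $\bltp = \blt + \delta\frac{\onem\T}{N}\Gl(N(\bA\xt - \bb) + \yt,\lt)$, whereas $\plt$ obeys $\pltp = \Rm\T\Wm\Rm\plt + \delta\Rm\T\Gl(\cdot)$, driven only through the Schur block $\Rm\T\Wm\Rm$. I would therefore take $\col(\xt,\blt)$ as the slow state (playing the role of the generic ``$\x$'', with equilibrium $\col(\xstar,\lstar)$ and domain $\mathcal{D} = \R^n\times\R^m_+$) and $\wt = \col(\plt,\pzt,\pyt)$ as the fast state.

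Freezing the slow state, the $\pzt$ and $\pyt$ blocks are the strictly stable linear recursions of~\eqref{eq:bl}-type, with equilibria $h_z(\x) = -\Rd\T\phi(\x)$ and $h_y(\x) = -\Rm\T N(\bA\x - \bb)$; the $\plt$ block is instead coupled, its equilibrium $h_\lambda$ solving the implicit relation $h_\lambda = \Rm\T\Wm\Rm h_\lambda + \delta\Rm\T\Gl(\onem c(\x),\onem\bl + \Rm h_\lambda)$, where I used that at the fast equilibrium $N(\bA\x - \bb) + \Rm h_y(\x) = \onem c(\x)$. Since $\Rm\T\Wm\Rm$ is Schur and $\Gl$ is globally Lipschitz, Banach's theorem yields a unique Lipschitz $h_\lambda$ for $\delta$ small with $\norm{h_\lambda} = O(\delta)$; moreover the KKT relations~\eqref{eq:equilibrium_primal_dual} give $\Gl(\onem c(\xstar),\onem\lstar) = \onem\,\nabla_{\lambda} H(A\xstar - b,\lstar) = 0$, so $h_\lambda(\xstar,\lstar) = 0$. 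Substituting $\wt = h$ into the slow update and using $\phi(\x) + \Rd h_z(\x) = \oned\sigma(\x)$ and $N(\bA\x - \bb) + \Rm h_y(\x) = \onem c(\x)$ collapses all local operators to their centralized counterparts, so that the reduced system reads
\begin{align*}
    \xtp &= \xt - \delta\big(F(\xt) + \nabla_{\x} H(A\xt - b,\blt)\big) + r_1^\iter,
    \\
    \bltp &= \blt + \delta\,\nabla_{\lambda} H(A\xt - b,\blt) + r_2^\iter,
\end{align*}
with $\norm{r_1^\iter} + \norm{r_2^\iter} = O(\delta\norm{\Rm h_\lambda})$. The leading part is the forward-Euler discretization of the centralized augmented primal-dual flow of~\cite{qu2018exponential}, whose exponential stability under the full-row-rank Assumption~\ref{ass:feasible_set_primal_duality} supplies a quadratic Lyapunov function that I would carry over to establish~\eqref{eq:W_generic}.

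For the boundary layer, in error coordinates $\col(\psi_\lambda,\psi_z,\psi_y) = \wt - h$ the $\psi_z,\psi_y$ blocks decouple into the Schur recursions already analysed in Lemma~\ref{lemma:bl}, while $\psi_\lambda$ obeys the same Schur recursion perturbed by a $\Gl$-difference that, by Lipschitzness of $\Gl$, couples to $(\psi_y,\psi_\lambda)$ only at order $O(\delta)$. A weighted sum $U = U_z + U_y + k\,U_\lambda$ of block Lyapunov functions (each from the discrete Lyapunov equation of the respective Schur matrix) then satisfies~\eqref{eq:U_generic} for $\delta$ small, the $\psi_y$-decrease absorbing the cross-coupling. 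Lipschitz continuity of $f$, $g$, $h$ follows from Standing Assumption~\ref{ass:objective_function}, the Lipschitzness of the $\max$-based operators~\eqref{eq:eqG}, and the previous step. With the lemmas certifying~\eqref{eq:U_generic} and~\eqref{eq:W_generic} in place, Theorem~\ref{th:theorem_generic} applies and yields global exponential stability of $(\xstar,\lstar,h(\xstar,\lstar))$, hence $\norm{\xt - \xstar} \le a_1 e^{-a_2\iter}$.

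The main obstacle is the reduced-system analysis. In contrast to the purely linear fast dynamics of the primal case, the reduced iteration is a genuinely nonlinear primal-dual scheme, and its quadratic Lyapunov certificate must be inherited from the continuous-time result of~\cite{qu2018exponential}. I expect three delicate points: (i) converting that continuous-time decrease into a discrete Euler decrease, absorbing the $O(\delta^2)$ discretization residual and thereby forcing $\delta < \bar{\delta}$; (ii) guaranteeing $\lambda_i^\iter \ge 0$ for all $\iter$ via $w_{ii} > \delta/\rho$ (cf.~\eqref{eq:critical_update}), on which the augmented-Lagrangian argument of~\cite{qu2018exponential} hinges and which keeps $\col(\xt,\blt)$ inside $\mathcal{D}$; and (iii) dominating the perturbation $r_1^\iter,r_2^\iter$ through $\norm{h_\lambda(\x,\bl)} = O(\delta)\,\norm{\col(\x - \xstar,\bl - \lstar)}$ (a consequence of $h_\lambda(\xstar,\lstar) = 0$ and Lipschitzness), which is exactly what aligns the reduced equilibrium with the true saddle point of~\eqref{eq:lagrangian}.
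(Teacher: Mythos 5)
Your proposal follows essentially the same route as the paper: the same change of coordinates killing $\bzt$ and $\byt$, the same slow state $\col(\xt,\blt)$ and fast state $\col(\plt,\pzt,\pyt)$, a boundary-layer Lyapunov function built from the Schur blocks with the $O(\delta)$ $\Gl$-coupling absorbed, identification of the reduced system with the centralized augmented primal--dual scheme of Qu--Li, and an appeal to Theorem~\ref{th:theorem_generic}; your three ``delicate points'' are exactly the ones the paper's Lemmas~\ref{lemma:bl_primal_dual} and~\ref{lemma:rs_primal_dual} resolve. The one place where you work harder than necessary is the $\plt$-component of the equilibrium map: you characterize $h_\lambda$ implicitly via Banach's fixed-point theorem, conclude only $\norm{h_\lambda}=O(\delta)$ with $h_\lambda(\xstar,\lstar)=0$, and then carry residuals $r_1^\iter,r_2^\iter$ through the reduced-system analysis (your point (iii)). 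In fact $h_\lambda\equiv 0$ identically: at the fast equilibrium the first argument of $\Gl$ is the consensual vector $\onem(A\x-b)$, so $\Gl\bigl(\onem(A\x-b),\onem\bl\bigr)=\onem\nabla_\lambda H(A\x-b,\bl)$ lies in the range of $\onem$ and is annihilated by $\Rm\T$; hence $0$ is the (unique) fixed point of your contraction, $h$ is independent of $\delta$, the residuals vanish, and the reduced system coincides \emph{exactly} with~\eqref{eq:desired_primal_dual_implicit}. This is the observation the paper uses in~\eqref{eq:h_primal_dual_definition}; incorporating it removes your point (iii) entirely and sidesteps any uniformity-in-$\delta$ issues for the Lipschitz constant of $h$. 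Everything else in your outline matches the paper's argument.
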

    Note that the additional condition $w_{ii} > \delta/\rho$ needs to be satisfied by $\delta$, given $\rho$, to ensure the dual variables remain non-negative, as discussed below \eqref{eq:critical_update}.
	As in the case of NE seeking without coupling constraints, the proof of Theorem~\ref{th:convergence_primal_dual} relies on a \emph{singular perturbations} analysis of system~\eqref{eq:algo_primal_dual}. 
    We provide this in the next subsection.
 \subsection{Proof of Theorem \ref{th:convergence_primal_dual}}
	\label{sec:sp_primal_dual}

As with the proof of Theorem \ref{th:convergence}, we show that the setting of Theorem \ref{th:convergence_primal_dual} fits the framework of Theorem \ref{th:theorem_generic}, and organize its proof in five steps.

\emph{1. Bringing \eqref{eq:algo_primal_dual} in the form of \eqref{eq:interconnected_system_generic}:}
    we introduce the change of coordinates
    \begin{align}\label{eq:change_mean_primal_dual}
        \begin{bmatrix}
            \bzt\\
            \pzt\\
        \end{bmatrix} &= \begin{bmatrix}
            \frac{\oned\T}{N}\\
            \Rd\T 
        \end{bmatrix}\zt, \quad 
        \begin{bmatrix}
            \byt\\
            \pyt\\
        \end{bmatrix} = \begin{bmatrix}
            \frac{\onem\T}{N}\\
            \Rm\T 
        \end{bmatrix}\yt 
        \notag\\ 
        \begin{bmatrix}
            \blt\\
            \plt
        \end{bmatrix} &= \begin{bmatrix}
            \frac{\onem\T}{N}\\
            \Rm\T 
        \end{bmatrix}\lt,
    \end{align}
    where $\Rd \in \R^{Nd \times (N-1)d}$, $\Rd\T\Rd = I$, $\Rm \in \R^{Nm \times (N-1)m}$, $\Rd\T\Rd = I$, $\norm{\Rd} = 1$, $\norm{\Rm} = 1$, and
        \begin{align}
    \Rd\Rd\T = I - \frac{\oned\oned\T}{N},~ \Rm\Rm\T = I - \frac{\onem\onem\T}{N} \label{eq:RdRm}.
    \end{align}
    
As in the proof of Theorem~\ref{th:convergence_primal_dual}, we use the initialization $\oned\T\z^0 = 0$ and $\onem\T\y^0 = 0$ to ensure that $\bzt = 0$ and $\byt = 0$ for all $\iter \ge 0$. 
    In view of~\eqref{eq:change_mean_primal_dual}, we can therefore rewrite~\eqref{eq:algo_primal_dual} by ignoring the dynamics of $\bzt$ and $\byt$, thus obtaining the system
    \begin{subequations}\label{eq:algo_primal_dual_compact}
        \begin{align}
            \chitp &= \chit + \delta f(\chit,\wt)\label{eq:chi_update_primal_dual}
            \\
            \wtp &= S\wt + K(\delta)u(\chit).\label{eq:w_update}
        \end{align}
    \end{subequations}
   in which
    \begin{subequations}
    \begin{align}
        \chit &\coloneqq \begin{bmatrix}
            \xt\\
            \blt
        \end{bmatrix}, \quad \wt \coloneqq \begin{bmatrix}
            \plt\\
            \pzt\\
            \pyt
        \end{bmatrix}
        \\
        f(\chit,\wt)
         &\coloneqq \begin{bmatrix}
            f_{x}(\xt,\onem\blt + \Rm\plt,\Rd\pzt,\Rm\pyt)
            \\
            \frac{\onem\T}{N}\Gl(N(\bar{A}\xt - \bar{b}) + \Rm\pyt, \onem\blt + \Rm\plt)
        \end{bmatrix}\label{eq:f_primal_dual} 
        \\
        S &\coloneqq \begin{bmatrix}
            \Rm\T \Wm\Rm& 0& 0\\
            0& \Rd\T \Wd\Rd& 0\\
            0& 0& \Rm\T \Wm\Rm
        \end{bmatrix}
        \\
        K(\delta) &\coloneqq \begin{bmatrix}
            \delta \Rm\T& 0& 0
            \\
            0& \Rd\T(\Wd - I)& 0\\
            0& 0& \Rm\T(\Wm - I)
        \end{bmatrix}    
        \\
        u(\chit) &\coloneqq \begin{bmatrix}
            \Gl(N(\bar{A}\xt - \bar{b}) + \Rm\pyt, \onem\blt + \Rm\plt)
            \\
            \phi(\xt)
            \\
            N(\bar{A}\xt - \bar{b})
        \end{bmatrix}.
            \end{align}
    \end{subequations}
    We view~\eqref{eq:algo_primal_dual_compact} as a singularly perturbed system, namely the interconnection between the slow dynamics~\eqref{eq:chi_update_primal_dual} and the fast one~\eqref{eq:w_update}. 
        Indeed, system~\eqref{eq:algo_primal_dual_compact} can be obtained from~\eqref{eq:interconnected_system_generic} by considering $\chit$ as the state of~\eqref{eq:slow_system_generic} and setting
        \begin{align}
            g(\chit,\wt,\delta) := S\wt + K(\delta)u(\chit).\label{eq:g_primal_dual} 
        \end{align}
    
    \emph{2. Equilibrium function $h$:}
    under the double stochasticity condition of $\cW$, due to Standing Assumption~\ref{ass:network}, and using \eqref{eq:RdRm},
    for any $\chit = \chi$,
    \begin{align}\label{eq:h_primal_dual_definition}
        h(\chi) \coloneqq \begin{bmatrix}
            0
            \\
            -\Rd\T\phi\left(\begin{bmatrix}
                I_n& 0
            \end{bmatrix}\chi\right)
            \\
            -\Rm\T N\left(\bar{A}\begin{bmatrix}
                I_n& 0
            \end{bmatrix}\chi - \bar{b}\right)
        \end{bmatrix}
    \end{align}
   constitutes an equilibrium of~\eqref{eq:w_update} (parametrized by $\chi$).
    
    \emph{3. Boundary layer system and satisfaction of \eqref{eq:U_generic}:}
the so-called boundary layer system associated to \eqref{eq:algo_primal_dual_compact} can be constructed by fixing $\chit=\chi=\col(\x,\bl)$ for some arbitrary $(\x,\bl) \in \R^n \times \R^m$, and rewriting it according to the error coordinates $\tw \coloneqq \col(\tpl,\tilde{z}_{\perp},\tilde{y}_{\perp}) \coloneqq w - h(\chi)$. Using \eqref{eq:RdRm}, we then obtain that
    \begin{equation}\label{eq:bl_primal_dual}
        \twtp = S\twt + \delta \tilde{u}(\chi,\twt),
    \end{equation}
    where
    \begin{align*}
        &\tilde{u}(\chi,\twt) 
        \coloneqq\begin{bmatrix}\Rm\T\Gl\left(\onem(A\x -b) + \Rm\tpyt, \onem\bl + \Rm\tplt\right)
            \\
            0
            \\
            0
        \end{bmatrix}.
    \end{align*}
    The next lemma provides a Lyapunov function for~\eqref{eq:bl_primal_dual}.
    \begin{lemma}\label{lemma:bl_primal_dual}
        Consider system~\eqref{eq:bl_primal_dual}. Then, there exists a continuous function $U: \R^{(N-1)(2m + d)} \to \R$ and $\bar{\delta}_1 > 0$ such that, for any $\delta \in (0,\bar{\delta}_1)$, $U$ satisfies~\eqref{eq:U_generic} with $\tw$ in place of $\psi$.\QEDB
    \end{lemma}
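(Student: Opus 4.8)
The plan is to take $U$ to be a fixed quadratic form $U(\twt) = \twt\T P \twt$ built from the drift matrix $S$ in~\eqref{eq:bl_primal_dual}, and to verify the three requirements of~\eqref{eq:U_generic} in turn. First I would argue that $S$ is Schur: it is block diagonal with diagonal blocks $\Rm\T \Wm \Rm$, $\Rd\T \Wd \Rd$, $\Rm\T \Wm \Rm$, and each of these is Schur by exactly the argument used for $\Rd\T\Wd\Rd$ in the proof of Theorem~\ref{th:convergence} — strong connectivity together with double stochasticity of $\cW$ (Standing Assumption~\ref{ass:network}) makes $\cW$ primitive with a simple Perron eigenvalue at $1$, so its restriction to the disagreement subspace through $\Rd$, $\Rm$ has spectral radius strictly less than one. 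Since $S$ is Schur, for any $Q=Q\T \succ 0$ the discrete Lyapunov equation $S\T P S - P = -Q$ has a unique solution $P=P\T\succ0$; I set $U(\twt):=\twt\T P\twt$. Condition~\eqref{eq:U_first_bound_generic} then holds with $b_1=\lambda_{\min}(P)$, $b_2=\lambda_{\max}(P)$, and condition~\eqref{eq:U_bound_generic} follows from the identity $\psi_1\T P\psi_1 - \psi_2\T P\psi_2 = (\psi_1-\psi_2)\T P\psi_1 + \psi_2\T P(\psi_1-\psi_2)$ and Cauchy--Schwarz, with $b_4=\norm{P}$. Note that $U$, $b_1$, $b_2$, $b_4$ are independent of $\delta$.

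The substance is the decrease condition~\eqref{eq:U_minus_generic}. I would first establish the key structural fact that the perturbation vanishes at the origin, i.e.\ $\tilde{u}(\chi,0)=0$. Evaluating $\tilde{u}$ at $\twt=0$ leaves only the first block $\Rm\T\Gl(\onem(A\x-b),\onem\bl)$; since the operator $\Gli$ in~\eqref{eq:eqG} does \emph{not} depend on $i$, feeding it the consensual inputs $\onem(A\x-b)$ and $\onem\bl$ produces a consensual output $\onem v$, and $\Rm\T\onem=0$ (a consequence of~\eqref{eq:RdRm}) annihilates it. Combining this with the \emph{global} Lipschitz continuity of $\Gl$ — which holds because its only nonlinearity is the $1$-Lipschitz map $\max\{\cdot,0\}$ — and $\norm{\Rm}=1$, I obtain a linear bound $\norm{\tilde{u}(\chi,\twt)}\le L_{\tilde{u}}\norm{\twt}$ with $L_{\tilde{u}}$ independent of $\chi$. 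Expanding along~\eqref{eq:bl_primal_dual},
\[
U(S\twt + \delta\tilde{u}(\chi,\twt)) - U(\twt) = -\twt\T Q\twt + 2\delta\,\twt\T S\T P\,\tilde{u}(\chi,\twt) + \delta^2\,\tilde{u}(\chi,\twt)\T P\,\tilde{u}(\chi,\twt),
\]
and bounding the last two terms by $(2\delta\norm{S}\norm{P}L_{\tilde{u}} + \delta^2\norm{P}L_{\tilde{u}}^2)\norm{\twt}^2$ yields a coefficient $-\lambda_{\min}(Q)+O(\delta)$. Choosing $\bar{\delta}_1>0$ small enough that this coefficient is at most some $-b_3<0$ gives~\eqref{eq:U_minus_generic} for all $\delta\in(0,\bar{\delta}_1)$.

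The main obstacle is precisely the vanishing-perturbation step. One must verify that when the consensus errors $\tplt$ and $\tpyt$ are zero every agent evaluates $\Gl$ at the same pair of arguments, so that the output lies in the consensus subspace and is killed by $\Rm\T$. This is what upgrades a plain Lipschitz estimate on $\tilde{u}$ into the homogeneous bound $\norm{\tilde{u}(\chi,\twt)}\lesssim\norm{\twt}$ that absorbs the cross term into the quadratic decrease. Together with the \emph{global} (not merely local) Lipschitz property of $\max\{\cdot,0\}$, this is exactly what allows the decrease in~\eqref{eq:U_minus_generic} to hold on all of $\R^{(N-1)(2m+d)}$, delivering the global rather than semi-global stability that Theorem~\ref{th:theorem_generic} requires.
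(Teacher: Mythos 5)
Your proposal is correct and follows essentially the same route as the paper: $S$ is Schur, $U$ is the quadratic form from the discrete Lyapunov equation, and the decrease condition is obtained by showing the perturbation $\tilde{u}$ admits a homogeneous bound $\norm{\tilde{u}(\chi,\twt)}\lesssim\norm{\twt}$ uniform in $\chi$, which absorbs the cross terms for small $\delta$. The only cosmetic difference is that you derive this bound from $\tilde{u}(\chi,0)=0$ plus global Lipschitz continuity of $\Gl$, whereas the paper reaches the same estimate via the explicit representation $\max\{r_1,0\}-\max\{r_2,0\}=\epsilon(r_1,r_2)(r_1-r_2)$ with $\epsilon\in[0,1]$ borrowed from~\cite{qu2018exponential}.
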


 \emph{4. Reduced system and satisfaction of \eqref{eq:W_generic}:}
 the so-called reduced system can be obtained by considering the fast dynamics in~\eqref{eq:chi_update_primal_dual} at steady state, i.e., $\wt = h(\chit)$ for any $\iter \ge 0$. We thus have
    \begin{equation}\label{eq:rs_primal_dual}
        \chitp = \chit + \delta f(\chit,h(\chit)).
    \end{equation}
    Let us expand~\eqref{eq:rs_primal_dual}. Using \eqref{eq:RdRm}, we obtain
    \begin{subequations}\label{eq:rs_primal_dual_explicit}
        \begin{align}
            \xtp &= \xt - \delta\left(\tF\left(\xt, \oned\sigma(\xt)\right)+\Gx\left(\onem(A\xt - b), \onem\blt\right)\right)\label{eq:x_update_rs}
            \\
            \bltp &= \blt + \delta\frac{\onem\T}{N}\Gl\left(\onem(A\xt - b), \onem\blt\right).\label{eq:barlambda_update_rs}
        \end{align}
    \end{subequations}
    Notice that
    \begin{align*}
        \tF\left(\x, \oned\sigma(\x)\right) &= F(\x)
        \\
        \Gx\left(\onem(A\xt - b), \onem\blt\right) &= \nabla_{\x} H(A\xt - b, \blt),
    \end{align*}
    and also
    \begin{align*}
        \frac{\onem\T}{N}\Gl\left(\onem(A\xt - b), \onem\blt\right) = \nabla_{\lambda} H(A\xt - b, \blt).
    \end{align*}
    Therefore, \eqref{eq:rs_primal_dual} is identical to the original update~\eqref{eq:desired_primal_dual_implicit}. Given the unique v-GNE $\xstar$ of~\eqref{eq:problem_constrained} (see Assumptions~\ref{ass:feasible_set},~\ref{ass:feasible_set_primal_duality}) and the associated multiplier $\lstar \in \R^m$, the next lemma provides a Lyapunov function for \eqref{eq:rs_primal_dual}, hence for~\eqref{eq:desired_primal_dual_implicit}. 
    \begin{lemma}\label{lemma:rs_primal_dual}
        Consider system~\eqref{eq:rs_primal_dual} and Assumptions~\ref{ass:feasible_set},~\ref{ass:feasible_set_primal_duality}. Then, there exist a continuous function $W: \R^{n + m} \to \R$, $\bar{\delta} > 0$ such that, for any $\delta \in (0,\bar{\delta})$, $W$ satisfies~\eqref{eq:W_generic} with $\chi$ in place of $x$.\QEDB
    \end{lemma}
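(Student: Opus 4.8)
The plan is to recognize that the reduced system~\eqref{eq:rs_primal_dual} coincides, after the identifications already carried out in~\eqref{eq:rs_primal_dual_explicit}, with the centralized augmented primal--dual iteration~\eqref{eq:desired_primal_dual_implicit}, i.e.\ with the forward-Euler discretization of the continuous-time flow of~\cite{qu2018exponential}. Writing $\chi = \col(\x,\bl)$ and
\begin{equation*}
    G(\chi) \coloneqq \col\!\big(F(\x) + \nabla_{\x} H(A\x - b,\bl),\, -\nabla_{\lambda} H(A\x - b,\bl)\big),
\end{equation*}
the iteration reads $\chi^{t+1} = \chi^t - \delta G(\chi^t)$, and by~\eqref{eq:equilibrium_primal_dual} the equilibrium $\chistar = \col(\xstar,\lstar)$ satisfies $G(\chistar) = 0$. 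I would then take the plain quadratic candidate $W(\chi) \coloneqq \tfrac12\norm{\chi - \chistar}^2$. With this choice~\eqref{eq:W_first_bound_generic} holds trivially with $c_1 = c_2 = \tfrac12$, and~\eqref{eq:W_bound_generic} follows with $c_4 = \tfrac12$ from the identity $\norm{a}^2 - \norm{b}^2 = (a-b)\T(a+b)$ applied to $a = \chi_1 - \chistar$, $b = \chi_2 - \chistar$ together with the triangle inequality.

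The decrease condition~\eqref{eq:W_minus_generic} is where the work lies. Setting $e \coloneqq \chi - \chistar$ and expanding the square,
\begin{equation*}
    W(\chi - \delta G(\chi)) - W(\chi) = -\delta\, e\T G(\chi) + \tfrac{\delta^2}{2}\norm{G(\chi)}^2.
\end{equation*}
Since $\nabla_1 J_i$, $\nabla_2 J_i$ and $\phi_i$ are Lipschitz (Standing Assumption~\ref{ass:objective_function}) and the componentwise $\max$ defining $\nabla H$ is $1$-Lipschitz, the operator $G$ is Lipschitz; as $G(\chistar)=0$ this yields $\norm{G(\chi)} \le L_G\norm{e}$ for some $L_G>0$. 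The crucial ingredient is then the strong-monotonicity-type bound
\begin{equation*}
    e\T G(\chi) = (\chi - \chistar)\T\big(G(\chi) - G(\chistar)\big) \ge \mu_G\norm{\chi - \chistar}^2,
\end{equation*}
for some $\mu_G > 0$. Granting this, the two estimates combine to give $W(\chi - \delta G(\chi)) - W(\chi) \le -\delta(\mu_G - \tfrac{\delta}{2}L_G^2)\norm{e}^2$, so that choosing $\bar\delta \le \mu_G/L_G^2$ produces~\eqref{eq:W_minus_generic} with $c_3 = \mu_G/2$ for every $\delta \in (0,\bar\delta)$.

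The main obstacle is establishing the strong monotonicity of the augmented operator $G$, which is precisely the static property underpinning the exponential-stability analysis of~\cite{qu2018exponential} and is where Standing Assumption~\ref{ass:objective_function} and Assumption~\ref{ass:feasible_set_primal_duality} enter. Splitting $e\T(G(\chi) - G(\chistar))$ into its primal and dual parts, the primal block contributes $(\x - \xstar)\T(F(\x) - F(\xstar)) \ge \mu\norm{\x - \xstar}^2$ by strong monotonicity of $F$, while the firm nonexpansiveness of the projection $P_{\R^m_+}$ hidden in the $\max$ operator controls the cross terms and, after rearrangement, yields an additional $\rho\norm{\nabla_{\lambda} H(A\x-b,\bl)}^2$ contribution. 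The delicate point is to convert this dual-residual term into genuine $\norm{\bl - \lstar}^2$ coercivity: this is exactly where the full-row-rank condition $\kappa_1 I_m \preccurlyeq AA\T$ of Assumption~\ref{ass:feasible_set_primal_duality} is indispensable, as it allows one to invert the stationarity relation linking $F(\x)$, $A\T\bl$ and the dual residual, thereby bounding the multiplier error by the primal error and that residual. I would carry out this step following the argument of~\cite{qu2018exponential}, adapting it from continuous to discrete time --- the adaptation being confined to the $O(\delta^2)$ remainder already absorbed above --- and conclude by collecting the constants $c_1,\dots,c_4$ and $\bar\delta$.
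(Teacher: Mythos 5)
There is a genuine gap at the step you yourself single out as the ``crucial ingredient'': the augmented primal--dual operator $G$ is \emph{not} strongly monotone, so the bound $e\T G(\chi)\ge\mu_G\norm{e}^2$ is false. Using the mean-value identity $\max\{r_1,0\}-\max\{r_2,0\}=\epsilon(r_1,r_2)(r_1-r_2)$ one gets $G(\chi)-G(\chistar)=-D(\chi,\chistar)(\chi-\chistar)$ with $-D=\begin{bmatrix}B+\rho A\T EA& A\T E\\ -EA& \tfrac1\rho(I-E)\end{bmatrix}$; since $E$ is diagonal the off-diagonal contributions to the quadratic form cancel, the primal block gives at least $\mu\norm{\x-\xstar}^2$, but the dual block $\tfrac1\rho(I-E)$ is \emph{exactly zero} whenever $\rho([A\x-b]_\ell)+[\lambda]_\ell\ge 0$ for all $\ell$ --- the generic situation near an equilibrium with active constraints. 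A one-dimensional check: for $F(\x)=\x$, $A=1$, $b=0$, in the region $\rho\x+\lambda\ge0$ one computes $(\chi-\chi')\T(G(\chi)-G(\chi'))=(1+\rho)(\x-\x')^2$, which vanishes for $\x=\x'$, $\lambda\ne\lambda'$. Consequently your candidate $W(\chi)=\tfrac12\norm{\chi-\chistar}^2$ only certifies a decrement of order $-\delta\mu\norm{\xt-\xstar}^2+O(\delta^2)\norm{\chit-\chistar}^2$, which cannot be dominated by $-\delta c_3\norm{\chit-\chistar}^2$ as \eqref{eq:W_minus_generic} requires.

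The repair --- and what the paper, following \cite[Lemma~4]{qu2018exponential}, actually does --- is to abandon the Euclidean metric and take the skewed quadratic form $W(\chi)=(\chi-\chistar)\T M(\chi-\chistar)$ with $M=\begin{bmatrix}cI& A\T\\ A& cI\end{bmatrix}$ and $c>\sqrt{\kappa_2}$ (so $M\succ0$). Writing $f(\chit,h(\chit))-f(\chistar,h(\chistar))=D(\chit,\chistar)(\chit-\chistar)$ as above, together with $F(\xt)-F(\xstar)=B(\xt,\xstar)(\xt-\xstar)$ and $\mu I\preccurlyeq B\preccurlyeq\beta_1 I$, one establishes the matrix inequality $D\T M+MD\preccurlyeq-\tau M$ with $\tau=\kappa_1/(2c)$: it is precisely the off-diagonal blocks of $M$ that inject a term proportional to $AA\T\succcurlyeq\kappa_1 I$ into the dual--dual block and thereby produce coercivity in $\bl-\lstar$. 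This is where Assumption~\ref{ass:feasible_set_primal_duality} enters, not through ``inverting a stationarity relation'' as you suggest. Your remaining steps --- Lipschitz continuity of $G$ giving $\norm{G(\chi)}\le L_G\norm{e}$, absorption of the $O(\delta^2)$ remainder, and the verification of \eqref{eq:W_first_bound_generic} and \eqref{eq:W_bound_generic} --- carry over once $\norm{\cdot}^2$ is replaced by $\norm{\cdot}_M^2$.
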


\emph{5. Lipschitz continuity of $f$, $g$ and $h$:}
as we will be invoking Theorem~\ref{th:theorem_generic}, we need to ensure that the required Lipschitz properties are satisfied. In particular, we need to show that $f$, $g$ in \eqref{eq:f_primal_dual} and \eqref{eq:g_primal_dual}, respectively, and $h$ in \eqref{eq:h_primal_dual_definition} are Lipschitz with respect to their arguments. This is guaranteed by the Lipschitz continuity of the aggregation rules and the gradients of the cost functions (cf. Standing Assumption~\ref{ass:objective_function}), %
and the Lipschitz continuity of $G_x$ and $G_{\lambda}$ (that appear in $f$ and $g$), which is ensured as shown in \eqref{eq:max_r1_r2_bl} within the proof of Lemma\ref{lemma:bl_primal_dual}.

    By combining Lemmas~\ref{lemma:bl_primal_dual} and \ref{lemma:rs_primal_dual} with the Lipschitz continuity properties expressed above, we can apply Theorem~\ref{th:theorem_generic}.
     Then, there exists $\bar{\delta} \in (0, \min(\bar{\delta}_1,\bar{\delta}_2))$ so that, for any $\delta \in (0,\bar{\delta})$, $\col(\xstar,\lstar,h(\xstar,\lstar))$ is an exponentially stable equilibrium point for~\eqref{eq:algo_primal_dual_compact}.

\section{Numerical Examples}
\label{sec:numerical_simulation}

We demonstrate the efficacy of \algo/ and \algoc/ and compare them with the most closely related distributed equilibrium-seeking algorithms from the literature.
First, we consider the case with local constraints only, and then we focus also on problems with coupling constraints.
In both cases, we performed Monte Carlo simulations consisting of $25$ trials.
In each trial, we randomly generate the problem parameters, the graph of the network, and the initial conditions of the algorithms' variables.

\subsection{Example without coupling constraints}

In this subsection, we consider an instance of problem~\eqref{eq:problem} and perform numerical simulations in which we compare \algo/ with Algorithm~2 proposed in~\cite{parise2020distributed} and Algorithm~4 proposed in~\cite{bianchi2022fast}.
We consider the multi-agent demand response problem considered in~\cite{parise2020distributed}. Consider $N$ loads whose electricity consumption $\x_i \coloneqq \col(\x_{i,1},\dots,\x_{i,T}) \in \R^T$ with $T \in \mathbb{N}$ has to be chosen to solve
\begin{equation*}
	\forall i \in \mc I : \min_{\x_i \in \X_i} \, \rho_i\norm{\x_i - \hat{u}_i}^2 + (\lambda\sigma(\x) + p_0)\T \x_i,
\end{equation*}
where $\hat{u}_i \in \R^T$ denotes some nominal energy profile, $\rho_i > 0$ is a constant weighting parameter, and the term $\lambda\sigma(\x) + p_0$  with $\lambda \in \R$, $p_0 \in \R^T$ models the unit price which is taken to be an affine increasing function of the aggregate (average) energy demand $\sigma(\x)= (1/N) \sum_{i \in \mc I} \x_i$. 
As for the local feasible set $\X_i \subseteq \R^{T}$, for all $i \in \mc I$, we pick 
\begin{align*}
    \X_i \coloneqq \bigg\{\x_i \in \R^T  \mid\,  &s_{i,\tau+1}(\x_i) \in \mc S_i\; \text{and}\; \x_{i,\tau} \in \mc U_i \hspace{0.1cm} \forall \tau \in \until{T},
     \sum_{\tau=1}^T \x_{i,\tau} =  \sum_{\tau=1}^T \hat{u}_{i,\tau}\bigg\},
\end{align*}
where $\mc U_i \subseteq \R$, $\mc S_i \subseteq \R$, and $s_{i,\tau}(\x_i)$ is the state of the $i$-th load at time $\tau$; this, given the parameters $a_i, b_i \in \R$, is computed according to the linear dynamics
\begin{align*}
    s_{i,\tau} = a_i^{\tau-1}s_{i,1} + \sum_{k=1}^{\tau - 1}a^{k-1}b_i\x_{i,\tau - k},
\end{align*}
where $s_{i,1} \in \mc S_i$ is the initial condition of the state of the $i$-th load.
To instantiate the problem, we set $T = 24$ and randomly generate values for $\hat{u}_i$, $\rho_i$, $\lambda$, $p_0$, $a_i$, $b_i$, $s_{i,1}$ and initial strategies $\x_{i,1}$ from uniform distributions. As for the sets $\mc U_i$ and $\mc S_i$, we pick the intervals $[0,1]$ and $[0,10]$, respectively. We consider a network with $N = 10$ players communicating according to an undirected, connected \er/ graph with parameter $0.3$. 

This setting satisfies our standing assumptions. 
We compare our scheme, namely, \algo/ with Algorithm~2 in~\cite{parise2020distributed} and Algorithm~4 in~\cite{bianchi2022fast}.
We empirically tune the former with $v_1 = v_2 = 20$ communication rounds per iterate and update the auxiliary variable $\zt$ according to $\ztp = (1-\lambda)\zt + \lambda\mathcal{A}_{v_1,v_2}$ with $\lambda = 0.01$ (the quantity $\mathcal{A}_{v_1,v_2}$ is a proxy for the unavailable aggregative variable $\sigma(\x)$, see~\cite{parise2020distributed} for more details). 
We empirically tuned the method by~\cite{bianchi2022fast} choosing $\alpha = 0.1$, $\beta = 1$, and $\tau_i = 0.1$ for all $i \in \mc I$.
As for the parameters of our scheme, we set $\delta = 0.5$ and $\gamma = 0.001$.
Fig.~\ref{fig:error_dr} shows the evolution of the normalized distance $\norm{\xt - \xstar}/\norm{\xstar}$ from the NE $\xstar$ as the communication rounds (corresponding to iterations) progress. Our algorithm exhibits faster convergence and achieves higher accuracy in the calculation of the equilibrium $\xstar$ with respect to the method in~\cite{parise2020distributed}, while it turns out to be slower than the algorithm in~\cite{bianchi2022fast}.
This was anticipated as (i) the method in~\cite{parise2020distributed} is not guaranteed to converge to the exact NE (see Table \ref{table:unconstrained}) and (ii) the method in~\cite{bianchi2022fast} is based on proximal-based updates which are known to exhibit faster behavior compared to gradient-based updates, but are computationally more intensive due to the proximal operator involved.
In Table~\ref{table:execution_times}, we provide a numerical comparison of the considered methods in terms of the mean and standard deviation based on Monte  Carlo simulations of the time needed to perform a single iterate.
As expected, \algo/ turns out to be much lighter than the other algorithms from a computational point of view. 
The simulations have been executed on Matlab, using {\scshape fmincon()} to solve the optimization steps involved in the methods by~\cite{parise2020distributed} and~\cite{bianchi2022fast}.
\begin{figure}
    \begin{minipage}{\columnwidth}
        \centering
        \includegraphics[scale=.9]{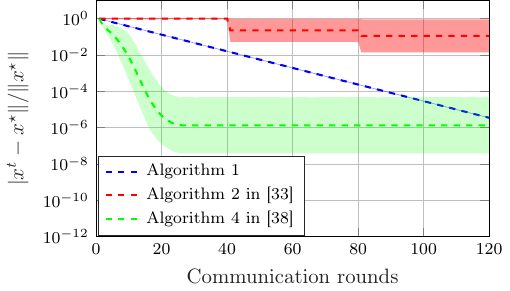}
	\caption{Mean and $1$-standard deviation band (based on Monte Carlo simulations) of the normalized distance of the iterates from the NE achieved by \algo/ (Algorithm~\ref{algo:unconstrained}), the algorithm by~\cite{parise2020distributed}, and the algorithm by~\cite{bianchi2022fast} on a case study introduced in \cite{parise2020distributed}.}
	\label{fig:error_dr}
    \end{minipage}
    \begin{minipage}{\columnwidth}
        \captionsetup{type=table} 
        \centering
        \vspace{1em}
            \begin{tabular}{c|c|c|c}  
                Time of iterate &Alg.~1 &Alg.~2 in~\cite{parise2020distributed} &Alg.~4 in~\cite{bianchi2022fast} \\ 
                \hline
                Mean (s) &$8 \times 10^{-4}$ &$6.38 \times 10^{-2}$ &$7.67 \times 10^{-2}$
                \\
                \hline
                Std. Dev. (s) &$2 \times 10^{-4}$ &$9.6 \times 10^{-3}$ &$1.18 \times 10^{-2}$\\
                \hline
            \end{tabular}
        	\caption{Execution time of a single iterate with \algo/ (Algorithm~\ref{algo:unconstrained}) and the considered algorithms in~\cite{parise2020distributed} and~\cite{bianchi2022fast}. Mean and $1$-standard deviation are based on Monte Carlo simulations of the case study in \cite{parise2020distributed}.}\label{table:execution_times}
    \end{minipage}
\end{figure}

\subsection{Example with coupling constraints}

Here, we compare our \algoc/ algorithm with the distributed methods proposed in~\cite{belgioioso2020distributed},~\cite{gadjov2020single} and~\cite{bianchi2022fast}.
For a fair comparison, we test the scheme by~\cite{belgioioso2020distributed} with a constant step-size even if convergence was theoretically proven only with a diminishing one (see Table \ref{table:constrained}); note that slower convergence is expected by using a diminishing step-size.
We focus on a case study inspired by~\cite{li2021distributed} -- where it was addressed within a cooperative scenario -- and adapt it as an instance of \eqref{eq:problem_constrained}. 
In particular, we consider the cost function
\begin{align*}
    J_i(\x_i,\sigma(\x)) = \frac{1}{2}\norm{\x_i - p_i}^2 + \frac{w}{2}\norm{\x_i - \sigma(\x)},
\end{align*} 
where $w > 0$ and $p_i \in \R^{n_i}$ for all $i \in \mc I$, while $\sigma(x) = \frac{1}{N}\sum_{i\in \mc I}\x_i$. 
We consider a communication graph with ring topology.
As for the coupling constraints, in each trial of the Monte Carlo simulations, we randomly generate each $A_i$ and $b_i$ by imposing the full row rank property for the former (cf. Assumption~\ref{ass:feasible_set_primal_duality}) and extracting the latter from the interval $[0,100]$ with a uniform probability; we set $N = 20$, $n_i = 2$.
Moreover, in each trial, we uniformly randomly extract each $p_i$ and $w$ from $[0,100]^{2}$ and $[0,1]$, respectively.
We empirically tune the algorithm in~\cite{belgioioso2020distributed} with $\alpha_i = \beta_i = 0.5$ for all $i \in \mc I$, and $\gamma^\iter = 0.1$ for all $\iter \ge 0$.
As for the parameters of the method in~\cite{gadjov2020single}, we empirically choose $c = 1$, $k = 0.1$, $\tau = 0.2$, $\alpha = 0.2$, and $v = 0.1$.
The algorithm in~\cite{bianchi2022fast} has been empirically tuned setting $\alpha = 0.3$, $\beta = 0.1$, $\tau_i = 0.3$ and $\delta_i = 0.3$ for all $i \in \mc I$.
Finally, as for the parameters of our algorithm, we empirically tune them as $\delta = 0.05$ and $\rho = 0.1$.
In Fig.~\ref{fig:montecarlo_constraints}, we compare the performance of the algorithms in~\cite{belgioioso2020distributed},~\cite{gadjov2020single}, and~\cite{bianchi2022fast} with Algorithm~\ref{algo:constrained} in terms of the normalized distance $\norm{\xt - \xstar}/\norm{\xstar}$ from the GNE $\xstar$. 
In this case, the proposed scheme outperforms the others in terms of accuracy and convergence speed.
\begin{figure}
	\centering
	\includegraphics[scale=.9]{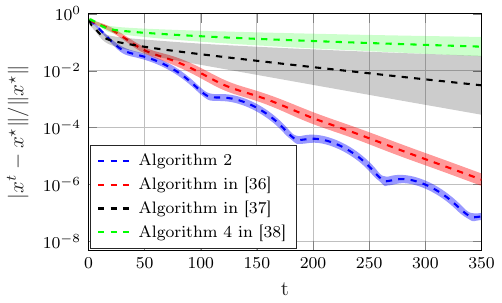}
	\caption{Mean and $1$-standard deviation band (based on Monte Carlo simulations) of the normalized distance of the iterates from the GNE achieved by \algoc/ (Algorithm~\ref{algo:constrained}), and the algorithms by~\cite{belgioioso2020distributed},~\cite{gadjov2020single}, and~\cite{bianchi2022fast}.}.
	\label{fig:montecarlo_constraints}
\end{figure}

\section{Conclusion and Outlook}
\label{sec:conlcusions}

We propose two novel fully-distributed algorithms for (generalized) equilibrium seeking in aggregative games over networks. 
The first algorithm is designed to address the case where only local constraints are present.
The second method does not involve local constraints, however, it allows handling coupling constraints, thus encompassing generalized Nash equilibrium problems.
Both schemes are studied by means of singular perturbations analysis in which slow and fast dynamics are identified and separately investigated to demonstrate the linear convergence of the whole interconnection to the (generalized) Nash equilibrium. 
Current work concentrates on extending our analysis line to allow for local constraint sets, either in a hard manner or by means of dualizing these and satisfying them asymptotically.
An additional aspect worth of investigation is the possibility of time-varying communication patterns among the agents.
Finally, we perform detailed numerical simulations showing the effectiveness of the proposed methods and that they outperform state-of-the-art distributed methods. 

\appendix

\setcounter{section}{0}
\renewcommand{\thesection}{\Alph{subsection}}

\subsection{$Q$-Linear rate}
\label{sec:linear}

Here, we report the definition of $Q$-linear convergence~\cite[App.~A.2]{nocedal1999numerical}.
For the sake of readability, in the rest of the document, we omit the prefix $Q$.
\begin{definition}
    Let $\{\xt\}$ be a sequence in $\R^n$ that converges to $\xstar \in \R^{n}$.
    We say that the convergence is $Q$-linear if there is a constant $r \in (0, 1)$ such that 
    \begin{align*}
        \frac{\norm{\xtp - \xstar}}{\norm{\xt - \xstar}} \leq r,
    \end{align*}
    for all $\iter$ sufficiently large.\QEDB
\end{definition}

\subsection{Proof of Theorem \ref{th:theorem_generic}}\label{sec:auxiliary}

    Let $\twt \coloneqq \wt - h(\xt)$ and accordingly rewrite~\eqref{eq:interconnected_system_generic} as 
    \begin{subequations}\label{eq:interconnected_system_generic_err}
        \begin{align}
            \xtp &= \xt + \delta f(\xt,\twt + h(\xt))\label{eq:slow_system_generic_err}
            \\
            \twtp &= g(\twt  +  h(\xt),\xt,\delta) -  h(\xt)  +  \Delta h(\xtp,\xt),\label{eq:fast_system_generic_err}
        \end{align}
    \end{subequations}
    where $\Delta h(\xtp,\xt) \coloneqq - h(\xtp) + h(\xt)$. Pick $W$ as in~\eqref{eq:W_generic}. By evaluating $\Delta W(\xt) \coloneqq W(\xtp) - W(\xt)$ along the trajectories of~\eqref{eq:slow_system_generic_err}, we obtain 
    \begin{align}
        \Delta W(\xt) 
        &= W(\xt + \delta f(\xt,\twt + h(\xt))) - W(\xt)
        \notag\\
        &\stackrel{(a)}{=} 
        W(\xt + \delta f(\xt,h(\xt))) - W(\xt) 
        + W(\xt + \delta f(\xt,\twt + h(\xt))) 
        - W(\xt + \delta f(\xt,h(\xt)))
        \notag\\
        &\stackrel{(b)}{\leq}
         -\delta c_3\norm{\xt - \xstar}^2 + W(\xt + \delta f(\xt,\twt + h(\xt))) 
        - W(\xt + \delta f(\xt,h(\xt)))
        \notag\\
        &\stackrel{(c)}{\leq}
         -\delta c_3\norm{\xt - \xstar}^2 + 2\delta c_4 L_f\norm{\twt}\norm{\xt - \xstar} 
         + \delta^2c_4 L_f\norm{\twt}\norm{ f(\xt,\twt + h(\xt))}
        \notag\\
        &\hspace{.5cm}
        + \delta^2c_4L_f\norm{\twt} \norm{ f(\xt, h(\xt))},\label{eq:deltaU_generic}
    \end{align}
    where in $(a)$ we add and subtract the term $W(\xt + \delta f(\xt,h(\xt)))$, in $(b)$ we exploit~\eqref{eq:W_minus_generic} to bound the difference of the first two terms, in $(c)$ we use~\eqref{eq:W_bound_generic}, the Lipschitz continuity of $f$, and the triangle inequality. By recalling that $f(\xstar, h(\xstar)) = 0$ we can thus write 
    \begin{align}
        \| f(\xt, \twt + h(\xt))\| &= \norm{ f(\xt, \twt + h(\xt)) - f(\xstar, h(\xstar))}
        \notag\\
        &\stackrel{(a)}{\leq}
         L_f\norm{\xt - \xstar} + L_f\norm{\twt + h(\xt) - h(\xstar)}
         \nonumber \\
         &\stackrel{(b)}{\leq}
         L_f(1 + L_h)\norm{\xt - \xstar} + L_f\norm{\twt},\label{eq:bound_Lf_L_h_wt}
    \end{align}
    where in $(a)$ we use the Lipschitz continuity of $f$ and $h$, and in $(b)$ we use the Lipschitz continuity of $h$ together with the triangle inequality. With similar arguments, we have
    \begin{align}
        \norm{ f(\xt, h(\xt))} \leq L_f (1+ L_h)\norm{\xt - \xstar}.\label{eq:bound_L_f_L_h}
    \end{align}
Using inequalities~\eqref{eq:bound_Lf_L_h_wt} and~\eqref{eq:bound_L_f_L_h} we then bound~\eqref{eq:deltaU_generic} as 
    \begin{align}
        \Delta W (\xt) &\leq -\delta c_3\norm{\xt - \xstar}^2 + 2\delta c_4 L_f\norm{\twt}\norm{\xt - \xstar} 
        + \delta^2c_4 L_f^2\norm{\twt}^2
        \notag\\
        &\hspace{.5cm}
        +  2 \delta^2c_4 L_f^2(1+L_h)\norm{\twt}\norm{\xt - \xstar}
        \notag\\
        &\leq
        -\delta c_3\norm{\xt - \xstar}^2 + \delta^2 k_3\norm{\twt}^2
        + (\delta k_1 + \delta^2 k_2)\norm{\twt}\norm{\xt - \xstar}, \label{eq:deltaU_generic_final} 
    \end{align}
    where we introduce the constants
    \begin{align*}
        k_1 &\coloneqq 2c_4L_f,
        \quad
        k_2 \coloneqq 2c_4 L_f^2(1+L_h),
        \quad%
        k_3 \coloneqq c_4 L_f^2.
    \end{align*}

    We now pick $U$ as in~\eqref{eq:U_generic}. By evaluating $\Delta U(\twt) \coloneqq U(\twtp) - U(\twt)$ along the trajectories of~\eqref{eq:fast_system_generic_err}, we obtain 
    \begin{align}
        \Delta U(\tw) 
        &
        = U(g(\twt + h(\xt),\xt,\delta) -h(\xt) + \Delta h(\xtp,\xt)) 
        - U(\twt)
        \notag\\
        &\stackrel{(a)}{\leq}  U(g(\twt + h(\xt),\xt,\delta) - h(\xt)) - U(\twt) 
        - U(g(\twt + h(\xt),\xt,\delta) - h(\xt))
        \notag\\
        &\hspace{.5cm}
        + U(g(\twt + h(\xt),\xt,\delta) -h(\xt) + \Delta h(\xtp,\xt))
        \notag\\        
        &\stackrel{(b)}{\leq}  -b_3\norm{\twt}^2  - U(g(\twt + h(\xt),\xt,\delta) - h(\xt))
        \notag\\
        &\hspace{.5cm}
        + U(g(\twt + h(\xt),\xt,\delta) -h(\xt) + \Delta h(\xtp,\xt)) 
        \notag\\
        &\stackrel{(c)}{\leq}  -b_3\norm{\twt}^2  
        + b_4\norm{\Delta h(\xtp,\xt)}\norm{g(\twt + h(\xt),\xt,\delta) - h(\xt)+ \Delta h(\xtp,\xt)} 
        \notag\\
        &\hspace{.5cm}
        + b_4\norm{\Delta h(\xtp,\xt)}\norm{g(\twt + h(\xt),\xt,\delta) - h(\xt)}
        \notag\\
        &\stackrel{(d)}{\leq}  -b_3\norm{\twt}^2  + b_4\norm{\Delta h(\xtp,\xt)}^2 
        + 2 b_4\norm{\Delta h(\xtp,\xt)}\norm{g(\twt  +  h(\xt),\xt,\delta)  -  h(\xt)},\label{eq:DeltaW_generic}
    \end{align}
    where in $(a)$ we add and subtract $U(g(\twt + h(\xt),\xt,\delta)-h(\xt))$, in $(b)$ we exploit~\eqref{eq:U_minus_generic} to bound the first two terms, in $(c)$ we use~\eqref{eq:U_bound_generic} to bound the the difference of the last two terms, and $(d)$ uses the triangle inequality. By using the definition of $\Delta h(\xtp,\xt)$ and the Lipschitz continuity of $h$, we write
    \begin{align}
        \norm{\Delta h(\xtp,\xt)} &\leq L_h\norm{\xtp - \xt} 
        \notag\\
        &\stackrel{(a)}{\leq}
        \delta L_h\norm{f(\xt,\twt+h(\xt))}
        \notag\\
        &\stackrel{(b)}{\leq}
        \delta L_h\norm{f(\xt,\twt+h(\xt)) - f(\xstar, h(\xstar))}
        \notag\\
        &\stackrel{(c)}{\leq}
        \delta L_hL_f(1  +  L_h)\norm{\xt  -  \xstar}  +  \delta L_hL_f\norm{\twt},\label{eq:bound_Deltah}
    \end{align}
    where in $(a)$ we use the update~\eqref{eq:slow_system_generic_err}, in $(b)$ we add the term $f(\xstar,h(\xstar))$ since this is zero, and in $(c)$ we use the triangle inequality and the Lipschitz continuity of $f$ and $h$. Moreover, since $g(h(\xt),\xt,\delta) = h(\xt)$, we obtain
    \begin{align}
        &\norm{g(\twt + h(\xt),\xt,\delta) - h(\xt)} 
        =\norm{g(\twt + h(\xt),\xt,\delta)  -  g(h(\xt),\xt,\delta)}
         \leq  
        L_g\norm{\twt},\label{eq:bound_g}
    \end{align}
    where the inequality is due to the Lipschitz continuity of $g$. Using inequalities~\eqref{eq:bound_Deltah} and~\eqref{eq:bound_g}, we then bound~\eqref{eq:DeltaW_generic} as 
    \begin{align}
        \Delta U(\tw) 
        &\leq -b_3\norm{\twt}^2  + 2\delta b_4 L_h L_gL_f(1 + L_h)\norm{\xt - \xstar}\norm{\twt} 
        + 2\delta b_4 L_h L_gL_f\norm{\twt}^2
        \notag\\
        &\hspace{.5cm}
        +\delta^2b_4 L_h^2 L_f^2(1 + L_h)^2\norm{\xt - \xstar}^2 
        + 2\delta^2b_4 L_h^2 L_f^2(1 + L_h)\norm{\xt - \xstar}\norm{\twt} 
        \notag\\
        &\hspace{.5cm}
        + \delta^2 b_4 L_h^2 L_f^2\norm{\twt}^2
        \notag\\
        &\leq
        (-b_3 +\delta k_6 + \delta^2 k_7)\norm{\twt}^2 + \delta^2 k_8\norm{\xt - \xstar}^2
        + (\delta k_4 + \delta^2 k_5)\norm{\xt - \xstar}\norm{\twt}, \label{eq:deltaW_generic_final}
    \end{align}
    where we introduce the constants 
    \begin{align*}
        k_4 &\coloneqq 2b_4 L_hL_gL_f(1 + L_h), \quad
        &&k_5 \coloneqq 2b_4 L_h^2L_f^2(1 + L_h),
        \\
        k_6 &\coloneqq 2b_4 L_hL_gL_f,
        \quad
        &&k_7 \coloneqq b_4 L_h^2L_f^2,
        \\
        k_8 &\coloneqq b_4 L_h^2L_f^2(1+L_h)^2.
    \end{align*}
    We pick the following Lyapunov candidate $V: \mathcal{D} \times \R^m \to \R$:
    \begin{align*}
        V(\xt,\twt) = W(\xt) + U(\twt). 
    \end{align*}
    By evaluating $\Delta V(\xt,\twt) \coloneqq V(\xtp,\twtp) - V(\xt,\twt) = \Delta W(\xt) + \Delta U(\twt)$ along the trajectories of~\eqref{eq:interconnected_system_generic_err}, we can use the results~\eqref{eq:deltaU_generic_final} and~\eqref{eq:deltaW_generic_final} to write 
    \begin{align}
        \Delta V(\xt,\twt) &\leq -\begin{bmatrix}
            \norm{\xt - \xstar}\\
            \norm{\twt}
        \end{bmatrix}\T Q(\delta) \begin{bmatrix}
            \norm{\xt - \xstar}\\
            \norm{\twt}
        \end{bmatrix},\label{eq:deltaV_generic}
    \end{align}
    where we define the matrix $Q(\delta) = Q(\delta)\T \in \R^2$ as 
    \begin{align*}
        &Q(\delta) \coloneqq \begin{bmatrix}
            \delta c_3 -\delta^2k_8& q_{21}(\delta)\\
            q_{21}(\delta) & b_3 - \delta k_6 - \delta^2 (k_3 + k_7)
        \end{bmatrix},
    \end{align*}
    with $q_{21}(\delta) \coloneqq -\frac{1}{2}(\delta (k_1+k_4) + \delta^2 (k_2+k_5))$. By relying on the Sylvester criterion \cite{khalil2002nonlinear}, we know that $Q \succ 0$ if and only if 
    \begin{align}\label{eq:conditions_generic}
        c_3b_3 > p(\delta)
    \end{align}
    where the polynomial $p(\delta)$ is defined as 
    \begin{align}
        p(\delta) &\coloneqq q_{21}(\delta)^2 + \delta^2 c_3k_6 
        + \delta^2 (\delta c_3(k_3 + k_7)+b_3k_8)
          -\delta^3k_6k_8 - \delta^4k_8(k_3+k_7).
    \end{align}
    We note that $p$ is a continuous function of $\delta$ and $\lim_{\delta \to 0}p(\delta) = 0$. Hence, there exists some $\bar{\delta} \in (0,\min\{\bar{\delta}_1,\bar{\delta}_2\})$ -- recall that $\bar{\delta}_1$ and $\bar{\delta}_2$  exist as $U$ and $W$ are taken to satisfy \eqref{eq:U_generic} and \eqref{eq:W_generic} -- so that~\eqref{eq:conditions_generic} is satisfied for any $\delta \in (0,\bar{\delta})$. Under such a choice of $\delta$, and denoting by $q>0$ the smallest eigenvalue of $Q(\delta)$, we can bound~\eqref{eq:deltaV_generic} as 
    \begin{align*}
        \Delta V(\xt,\twt) \leq -q\norm{\begin{bmatrix}\norm{\xt - \xstar}\\
            \norm{\twt}\end{bmatrix}}^2,
    \end{align*}
    which allows us to conclude, in view of \cite[Theorem~13.2]{chellaboina2008nonlinear}, that $(\xstar, 0)$ is an exponentially stable equilibrium point for system~\eqref{eq:interconnected_system_generic_err}. The theorem's conclusion follows then by considering the definition of exponentially stable equilibrium point and by reverting to the original coordinates $(\xt,\wt)$.

\subsection{Proofs of technical lemmas of Section~\ref{sec:SP}}
\label{sec:proof_SP}

\emph{Proof of Lemma~\ref{lemma:bl}}: system~\eqref{eq:bl} is a linear autonomous system whose state matrix $\Rd\T \Wd\Rd \in \R^{(N-1)d \times (N-1)d}$ is Schur. 
Hence, there exists $P \in \R^{(N-1)d \times (N-1)d}$, $P = P\T \succ 0$ for the candidate Lyapunov function $U(\tzt) = (\tzt)\T P\tzt$, solving the Lyapunov equation
\begin{equation}
    (\Rd\T \Wd\Rd)\T P \Rd\T \Wd\Rd - P = -Q. \label{eq:discr_lyap}
\end{equation}
for any $Q \in \R^{(N-1)d \times (N-1)d}$, $Q = Q\T \succ 0$.
Condition~\eqref{eq:U_first_bound_generic} follows then from the fact that $U$ is quadratic with $P \succ 0$ so $b_1$ and $b_2$ can be chosen to be its minimum and maximum eigenvalue, respectively.
The left-hand side of \eqref{eq:U_minus_generic} becomes $(\tzt)\T  ((\Rd\T \Wd\Rd)\T P \Rd\T \Wd\Rd - P) \tzt = -(\tzt)\T  Q \tzt$, where the equality is due to \eqref{eq:discr_lyap}. Hence,~\eqref{eq:U_minus_generic} is satisfied by taking $b_3$ to be the smallest eigenvalue of $Q$. To see \eqref{eq:U_bound_generic} notice that
\begin{align}
\norm{U(\tzt_1)-U(\tzt_2)} &= \norm{(\tzt_1)\T P\tzt_1 - (\tzt_2)\T P\tzt_2} \notag\\
&\stackrel{(a)}{\leq} \norm{(\tzt_1)\T P\tzt_1 - (\tzt_1)\T P\tzt_2} + \norm{(\tzt_2)\T P\tzt_1 - (\tzt_2)\T P\tzt_2} \notag\\
&\stackrel{(b)}{\leq} \norm{P} \norm{\tzt_1 - \tzt_2} \norm{\tzt_1} + \norm{P} \norm{\tzt_1 - \tzt_2} \norm{\tzt_2},\label{eq:eq_growth}
\end{align}
where $(a)$ follows from adding and subtracting $(\tzt_1)\T P\tzt_2$ and using the triangle inequality, while $(b)$ from the Cauchy-Schwarz inequality.
The bound~\eqref{eq:U_bound_generic} follows from~\eqref{eq:eq_growth} by setting $b_4$ as the largest eigenvalue of $P$.\hfill\qedsymbol

\smallskip

    We provide here the following technical lemma which is used in the proof of Lemma~\ref{lemma:rs}.

\begin{lemma}[Contraction of strongly monotone operator]\label{lemma:contraction}
    Let $F :\R^n \to \R^n$ be $\mu$-strongly monotone and $L$-Lipschitz continuous. If $\gamma \in (0,2\mu/L^2)$, then for any $\x, \x^\prime \in \R^n$ it holds %
    \begin{align*}
        \norm{\x - \gamma F(\x) - \x^\prime + \gamma F(\x^\prime)} \leq (1 - \bar{\mu})\norm{\x - \x^\prime},
    \end{align*}
    where $\bar{\mu} \coloneqq 1 - \sqrt{1 - \gamma(2\mu - \gamma L^2)} \in (0,1]$.
\end{lemma}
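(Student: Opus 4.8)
The plan is to expand the squared distance and feed in the two structural hypotheses on $F$ directly. First I would set $u \coloneqq \x - \x^\prime$ and expand the square
\begin{align*}
\norm{u - \gamma(F(\x) - F(\x^\prime))}^2 = \norm{u}^2 - 2\gamma\, u\T(F(\x) - F(\x^\prime)) + \gamma^2\norm{F(\x) - F(\x^\prime)}^2.
\end{align*}
This isolates a cross term, to be bounded from below, and a quadratic term, to be bounded from above.

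Next I would apply $\mu$-strong monotonicity to the cross term, giving $u\T(F(\x) - F(\x^\prime)) \ge \mu\norm{u}^2$, and $L$-Lipschitz continuity to the last term, giving $\norm{F(\x) - F(\x^\prime)}^2 \le L^2\norm{u}^2$. Substituting these two bounds collapses the right-hand side to
\begin{align*}
\norm{u - \gamma(F(\x) - F(\x^\prime))}^2 \le \left(1 - 2\gamma\mu + \gamma^2 L^2\right)\norm{u}^2 = \left(1 - \gamma(2\mu - \gamma L^2)\right)\norm{u}^2,
\end{align*}
which exactly exposes the factor $1 - \gamma(2\mu - \gamma L^2)$ appearing under the square root in the definition of $\bar{\mu}$. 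Taking square roots then yields $\norm{\x - \gamma F(\x) - \x^\prime + \gamma F(\x^\prime)} \le \sqrt{1 - \gamma(2\mu - \gamma L^2)}\,\norm{\x - \x^\prime} = (1 - \bar{\mu})\norm{\x - \x^\prime}$, which is the claimed contraction.

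The last thing to settle — and the only genuinely delicate point — is that the radicand $1 - \gamma(2\mu - \gamma L^2)$ lies in $[0,1)$, so that the square root is real and $\bar{\mu} \in (0,1]$. The upper bound $<1$ is immediate: the hypothesis $\gamma \in (0, 2\mu/L^2)$ forces $2\mu - \gamma L^2 > 0$, hence $\gamma(2\mu - \gamma L^2) > 0$, giving $\bar{\mu} > 0$. For nonnegativity of the radicand (equivalently $\bar{\mu} \le 1$), I would first record that combining strong monotonicity with Lipschitz continuity via Cauchy--Schwarz gives $\mu\norm{u}^2 \le u\T(F(\x)-F(\x^\prime)) \le \norm{u}\norm{F(\x)-F(\x^\prime)} \le L\norm{u}^2$, whence $\mu \le L$. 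Then, maximizing the concave map $\gamma \mapsto \gamma(2\mu - \gamma L^2)$ over $\gamma$ (the maximizer is $\gamma = \mu/L^2$) shows $\gamma(2\mu - \gamma L^2) \le \mu^2/L^2 \le 1$, so $1 - \gamma(2\mu - \gamma L^2) \ge 0$. This closes the argument; I expect this radicand check, resting on the inequality $\mu \le L$, to be the main (if modest) obstacle, the rest being a routine expand-and-bound computation.

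\hfill\qedsymbol
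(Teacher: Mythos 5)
Your proof is correct and follows essentially the same route as the paper: expand the square, apply strong monotonicity to the cross term and Lipschitz continuity to the quadratic term, then verify the radicand lies in $[0,1)$ using $\mu \le L$ (the paper completes this last check by writing $1 - 2\gamma\mu + \gamma^2 L^2 \ge (1-\gamma L)^2 \ge 0$ rather than by maximizing the concave quadratic, but both rest on the same inequality $\mu \le L$ obtained via Cauchy--Schwarz).
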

\begin{proof}
    We have that
    \begin{align}
        \norm{\x - \gamma F(\x) - \x^\prime + \gamma F(\x^\prime)}^2 
        &= \norm{\x - \x^\prime}^2 + \gamma^2\norm{F(\x) - F(\x^\prime)}^2
        - 2\gamma(\x - \x^\prime)^\top(F(\x) - F(\x^\prime))
        \notag\\
        &\stackrel{(a)}{\leq}
        \norm{\x - \x^\prime}^2 - \gamma(2\mu - \gamma L^2)\norm{\x - \x^\prime}^2,\label{eq:intermediate_contraction}
    \end{align}
    where in $(a)$ we use the strong monotonicity and the Lipschitz continuity of $F$. By construction, $\bar{\mu} \in (0,1]$ is equivalent to $\gamma(2\mu - \gamma L^2) > 0$ and $\gamma(2\mu - \gamma L^2) \leq 1$. The former holds since $\gamma \in (0,2\mu/L^2)$. To see the latter, notice that, by definition of $\mu$-strong monotonicity and $L$-Lipschitz continuity, we have
    		\begin{align*}
    			\mu \norm{\x-\x^\prime}^2 
                &\leq (F(\x) - F(\x^\prime))\T (\x-\x^\prime) 
                \\
	    		&\leq \norm{F(\x) - F(\x^\prime)}\norm{\x-\x^\prime} \leq L \norm{\x-\x^\prime}^2,
    		\end{align*}
    for any $\x,\x^\prime$, hence $\mu \leq L$. Thus, for any $\gamma$, it holds that $1 - 2\mu\gamma + \gamma^2 L^2 \geq 1 - 2\gamma L+ \gamma^2 L^2 = (1-\gamma L)^2 \geq 0$.
\end{proof}

\smallskip

\emph{Proof of Lemma~\ref{lemma:rs}}: pick $W: \R^{n} \to \R$ defined as 
        \begin{align*}
            W(\xt) = \frac{1}{2}\norm{\xt - \xstar}^2.
        \end{align*}
        Since $W$ is a quadratic function, conditions~\eqref{eq:W_first_bound_generic} and~\eqref{eq:W_bound_generic} are satisfied. To show \eqref{eq:W_minus_generic} we evaluate $\Delta W(\xt) \coloneqq W(\xtp) - W(\xt)$ along~\eqref{eq:rs_explicit}. We then have
        \begin{align}
            \Delta W(\xt) 
            &= \frac{1}{2}\norm{(1 - \delta)\xt  +  \delta\left(\Px{\xt - \gamma F(\xt)}\right)  -  \xstar}^2 
             -  \frac{1}{2}\norm{\xt  -  \xstar}^2
            \notag\\
            &\stackrel{(a)}{\leq}
            \frac{(1-\delta)^2}{2}\norm{\xt-\xstar}^2  - \frac{1}{2}\norm{\xt - \xstar}^2
            \notag\\
            &\hspace{.5cm}
            + (\delta  -  \delta^2)\norm{\xt  - \xstar }\norm{\Px{\xt  -  \gamma F(\xt)}  -  \Px{\xstar  -  \gamma F(\xstar)}} 
            \notag\\
            &\hspace{0.4cm}
            + \frac{\delta^2}{2}\norm{\Px{\xt - \gamma F(\xt)} - \Px{\xstar - \gamma F(\xstar)}}^2 
            \notag\\
            &\stackrel{(b)}{\leq}
            \frac{(1-\delta)^2}{2}\norm{\xt-\xstar}^2 - \frac{1}{2}\norm{\xt - \xstar}^2
            \notag\\
            &\hspace{0.4cm}
            + (\delta-\delta^2)\norm{\xt-\xstar}\norm{\xt - \gamma F(\xt) - \xstar + \gamma F(\xstar)} 
            \notag\\
            &\hspace{0.4cm}
            + \frac{\delta^2}{2}\norm{\xt - \gamma F(\xt) - \xstar + \gamma F(\xstar)}^2,\label{eq:deltaV_uncon}
        \end{align} 
        where in $(a)$ we introduce $\delta(\xstar - \Px{\xstar - \gamma F(\xstar)})$ within the first norm, as this is zero due to the definition of $\xstar$, expand the square, and use the Cauchy-Schwarz inequality. Inequality $(b)$ follows by the fact that for any $a,b$, we have that $\norm{\Px{a}-\Px{b}} \leq \norm{a-b}$, since the projection operator is nonexpansive.
        Since $F$ is $\mu$-strongly monotone and $\beta_1$ Lipschitz continuous (cf. Standing Assumption~\ref{ass:objective_function}), set $\bar{\gamma} = 2\mu/(\beta_1)^2$ and choose $\gamma \in (0,\bar{\gamma})$. Applying Lemma~\ref{lemma:contraction} yields
        \begin{align}
            \norm{\xt - \gamma F(\xt) - \xstar + \gamma F(\xstar)} \leq (1 - \bar{\mu})\norm{\xt - \xstar},\label{eq:strongly_monotone_constrained_uncon}
        \end{align}
        with $\bar{\mu} = 1 - \sqrt{1 - \gamma(2\mu - \gamma (\beta_1)^2)} \in (0,1]$. Thus, by using the inequality in~\eqref{eq:strongly_monotone_constrained_uncon}, we can bound~\eqref{eq:deltaV_uncon} as follows
        \begin{align}
            \Delta W(\xt) &\leq \frac{(1-\delta)^2}{2}\norm{\xt-\xstar}^2 - \frac{1}{2}\norm{\xt - \xstar}^2
            + (\delta-\delta^2)(1 - \bar{\mu})\norm{\xt-\xstar}^2
            \notag\\
            &\hspace{.5cm}
            + \delta^2(1 - \bar{\mu})^2/2\norm{\xt-\xstar}^2
            \notag\\
            &\stackrel{(a)}{=}
            -\delta\bar{\mu}\left(1-\delta\bar{\mu}/2\right)\norm{\xt - \xstar}^2.\label{eq:deltaV_final_uncon}
        \end{align}
        where $(a)$ is obtained by rearranging the above terms.
        Thus, for any $\delta \in (0,\bar{\delta}_2)$ with $\bar{\delta}_2 \coloneqq 2/\bar{\mu}$, $(1-\delta\bar{\mu}/2)>0$ in~\eqref{eq:deltaV_final_uncon}, thus establishing condition~\eqref{eq:W_minus_generic} and concluding the proof.

\subsection{Proofs of technical lemmas of Section~\ref{sec:sp_primal_dual}}
\label{sec:proof_SP_primal_dual}

\emph{Proof of Lemma~\ref{lemma:bl_primal_dual}}: since $\Rm\T \onem = 0$, we can write
        \begin{align}
            &
            \Rm\T\Gl\left(\onem(A\x -b) + \Rm\tpyt, \onem\bl + \Rm\tplt\right) 
            \notag\\
            &= \Rm\T\bigg(\Gl\left(\onem(A\x -b) + \Rm\tpyt, \onem\bl + \Rm\tplt\right) 
            - \onem \nabla_{\lambda}H(A\x - b,\bl)\bigg)
            \notag\\
            &=
            \Rm\T\bigg(\Gl\left(\onem(A\x -b) + \Rm\tpyt, \onem\bl + \Rm\tplt\right) 
            - \Gl(\onem(A\x - b), \onem\bl)\bigg),\label{eq:tildeu}
        \end{align}
        where in the last equality we used $\onem \nabla_{\lambda}H(A\x - b,\bl) = \Gl(\onem(A\x - b), \onem\bl)$. 
        Following \cite[Lemma~3]{qu2018exponential}, notice that, for any $r_1, r_2 \in \R$, there exists $\epsilon(r_1,r_2) \in [0,1]$ so that\footnote{If $r_1 \ne r_2$, pick $\epsilon = \frac{\max\{r_1,0\} - \max\{r_2,0\}}{r_1 - r_2}$, otherwise set $\epsilon = 0$.}
        \begin{equation}
            \max\{r_1,0\} - \max\{r_2,0\} = \epsilon(r_1,r_2)(r_1 - r_2).\label{eq:max_r1_r2_bl} 
        \end{equation}
        Let us introduce 
        \begin{align}\label{eq:qit_sit}
            q_i^\iter &:= \sum_{\ell=1}^m [\Rm\tpyt]_{\ell + (i-1)m}e_\ell,
            \quad 
            p_i^\iter := \sum_{\ell=1}^m [\Rm\tplt]_{\ell + (i-1)m}e_\ell,
        \end{align}
        and use them to define
        \begin{align}\label{eq:r1ir2i}
            \begin{split}
            r_{1,i}^\iter &:=  \rho(A\x -b + q_i^\iter) +  \bl + p_i^\iter
            \\
            r_{2,i} &:=  \rho(A\x - b) + \bl.
        \end{split}
        \end{align}
        By the definition of $\tilde{u}(\chi,\twt)$ we have that its norm $\norm{\tilde{u}(\chi,\twt)}$ is equal to the norm of the quantity in \eqref{eq:tildeu}.
        Let $\nw:= (N-1)(2m + d)$.
        As such, for any $\chi \in \R^{n+m}$ and $\twt \in \R^{\nw}$, we use the definition of $\Gl$ in~\eqref{eq:eqG}, $r_{1,i}^\iter$ and $r_{2,i}$ in~\eqref{eq:r1ir2i}, and apply~\eqref{eq:max_r1_r2_bl} for each component of $\tilde{u}(\chi,\twt)$ obtaining
        \begin{align}
            \norm{\tilde{u}(\chi,\twt)}
            &\leq 
             \bigg\|\Rm\T \frac{1}{\rho}\col\bigg(\sum_{\ell=1}^m \epsilon([r_{1,i}^\iter]_\ell,[r_{2,i}]_\ell)\big([r_{1,i}^\iter - \bl - p_i^\iter]_{\ell} 
             - [r_{2,i} - \bl]_{\ell}\big) e_\ell\bigg)_{i=1}^N \bigg\|
             \notag\\
             &\stackrel{(a)}{\leq}
             \bigg\|\Rm\T \frac{1}{\rho}\col\left(\sum_{\ell=1}^m \big([r_{1,i}^\iter - \bl - p_i^\iter]_{\ell} 
             - [r_{2,i} - \bl]_{\ell}\big) e_\ell\right)_{i=1}^N \bigg\|
             \notag\\
             &\stackrel{(b)}{=} \norm{\Rm\T \frac{1}{\rho}\col\left(\sum_{\ell=1}^m \rho[q_i^\iter]_\ell e_\ell\right)_{i=1}^N}
             \notag\\
             &\stackrel{(c)}{=}  \norm{\Rm\T \Rm\tpyt}
             \stackrel{(d)}{\leq}\norm{\twt}
             ,\label{eq:bound_tilde_u}
        \end{align}
            where in $(a)$ we use the fact that $\epsilon([r_{1,i}^\iter]_\ell,[r_{2,i}]_\ell) \in [0,1]$ for all $\ell \in\until{m}$ and $i \in \mc I$, $(b)$ uses the definitions in~\eqref{eq:r1ir2i} to simplify the terms, $(c)$ follows from~\eqref{eq:qit_sit}, and $(d)$ uses $\Rm\T \Rm = I$ and $\norm{\tpyt} \leq \norm{\twt}$ that holds since $\tpyt$ is a component of $\twt$.     
Now, let $U: \R^{\nw} \to \R$ be
        \begin{align*}
            U(\tw) = (\tw)\T \cU \tw,
        \end{align*}
        where $\cU \in \R^{\nw \times \nw}$ with $\cU = \cU\T \succ 0$, such that
        \begin{align}\label{eq:lypaunov_bl_primal_dual}
            S\T \cU S - \cU = -I.
        \end{align} 
        We remark that such a matrix $\cU$ always exists because, in light of Standing Assumption~\ref{ass:network}, both $\Rd\T \Wd\Rd$ and $\Rm\T \Wm\Rm$ are Schur matrices and, thus, $S$ is Schur as well. Under this choice of $U$, conditions \eqref{eq:U_first_bound_generic} and~\eqref{eq:U_bound_generic} are satisfied. To show \eqref{eq:U_minus_generic} we evaluate $\Delta U(\twt) \coloneqq U(\tw^{t+1}) - U(\twt)$ along the trajectories of~\eqref{eq:bl_primal_dual}, obtaining
        \begin{align}
            \Delta U(\twt) 
            &= (S\twt + \delta \tilde{u}(\chi,\twt))\T \cU (S\twt + \delta \tilde{u}(\chi,\twt))
            - (\twt)\T \cU \twt \notag
       \\
            &=-\norm{\twt}^2 + 2\delta (\twt)\T S\T \cU \tilde{u}(\chi,\twt) 
            + \delta^2  \tilde{u}(\chi,\twt)\T  \cU \tilde{u}(\chi,\twt)
            \notag
            \\
            &\leq 
            - (1 - \delta \mu_1 - \delta^2 \mu_2)\norm{\twt}^2,\label{eq:DeltaU_bl_primal_dual}
        \end{align}
        where the second equality is due to \eqref{eq:lypaunov_bl_primal_dual}, and the inequality follows from \eqref{eq:bound_tilde_u} and the Cauchy-Schwarz inequality , with the constants $\mu_1 \coloneqq 2\norm{S}\norm{\cU}$ and $\mu_2 \coloneqq \norm{\cU}$. 
        Thus, there always exists $\bar{\delta}_1 > 0$ small enough so that $(1 - \delta \mu_1 - \delta^2 \mu_2) > 0$ for any $\delta \in (0,\bar{\delta}_1)$, concluding the proof.
		
		\smallskip

\emph{Proof of Lemma~\ref{lemma:rs_primal_dual}}: the proof is inspired by~\cite[Theorem~2, Lemma~3, Lemma~4]{qu2018exponential},  adapted to our framework. Let $\cF: \R^{n+m}\to\R^{n+m}$ and $\mathcal{H}: \R^{n+m}\to\R^{n+m}$ be defined as
\begin{subequations}\label{eq:cFcH}
\begin{align}
        \cF(\chit) &\coloneqq 
        \begin{bmatrix}
            F\left(\begin{bmatrix}
            I& 0
        \end{bmatrix}\chit\right)
        \\
        0
    \end{bmatrix},
    \\
        \mathcal{H} (\chit) &\coloneqq \begin{bmatrix}\nabla_\x H\left(A\begin{bmatrix}
            I& 0
        \end{bmatrix}\chit - b,\begin{bmatrix}
            0& I
        \end{bmatrix}\chit\right)
        \\
        -\nabla_\lambda H\left(A\begin{bmatrix}
            I& 0
        \end{bmatrix}\chit - b,\begin{bmatrix}
            0& I
        \end{bmatrix}\chit\right)
        \end{bmatrix}. \label{eq:cFcH_b}
\end{align}
\end{subequations}
Applying \eqref{eq:max_r1_r2_bl} to each of the components of $\mc H(\chit)- \mc H(\chistar)$, for any $\chit \in \R^{n + m}$ we obtain
\begin{align}
    &\mc H(\chit)- \mc H(\chistar)
     =\begin{bmatrix}
        \rho A\T E(\chit,\chistar)A&  A\T E(\chit,\chistar)
        \\
        -E(\chit,\chistar)A& -\frac{1}{\rho}(E(\chit,\chistar) - I)
    \end{bmatrix}(\chit  - \chistar),\label{eq:nablaH}
\end{align}
where $E(\chit,\chistar) \coloneqq  \diag(\epsilon_1(\chit,\chistar),\dots,\epsilon_m(\chit,\chistar))$ and $\epsilon_\ell(\chit,\chistar) \in [0,1]$ so that 
\begin{align*}
    & \max\{\rho[A\xt  -  b]_\ell  +  [\blt]_\ell,0\}  -  \max\{\rho[A\xstar  -  b]_\ell  +  [\lstar]_\ell,0 \}
    \notag\\
    &= \epsilon_\ell(\chit,\chistar)(\rho[A\xt - b - A\xstar - b]_\ell + [\blt]_\ell - [\lstar]_\ell),
\end{align*}
for all $\ell \in \until{m}$ and $\chit \coloneqq \col(\xt,\blt) \in \R^{n+m}$. 
Moreover, for any $\xt \in \R^n$, we have
\begin{align}
    F(\xt) - F(\xstar) &= \int_{0}^1 \nabla F((1-\nu)\xstar + \nu\xt)(\xt - \xstar)d\nu
    \notag\\
    &\stackrel{(a)}{=} \left[\int_{0}^1 \nabla F((1 - \nu)\xstar + \nu\xt)d\nu\right](\xt - \xstar)
    \notag\\
    &\stackrel{(b)}{=}  B(\xt,\xstar)(\xt - \xstar).\label{eq:integral}
\end{align}
where in $(a)$ we have extracted the term $(\xt - \xstar)$ from the integral and in $(b)$ we have introduced $B(\xt,\xstar) \coloneqq \int_{0}^1 \nabla F((1-\nu)\xstar + \nu\xt)d\nu$. Since $F$ is $\mu$-strongly monotone and $\beta_1$-Lipschitz continuous (cf. Standing Assumption~\ref{ass:objective_function}),  we can uniformly bound the integrand term of~\eqref{eq:integral} as
\begin{equation*}
    \mu I \preccurlyeq \nabla F((1-\nu)\xstar + \nu\xt) \preccurlyeq \beta_1 I,
\end{equation*} 
which leads to 
\begin{equation}
    \mu I \preccurlyeq \int_{0}^1 \mu I d\nu \preccurlyeq B(\xt,\xstar) \preccurlyeq \int_{0}^1 \beta_1 I d\nu \preccurlyeq \beta_1 I.\label{eq:B_bounds}
\end{equation}
Combining~\eqref{eq:f_primal_dual},~\eqref{eq:cFcH},~\eqref{eq:nablaH}, and~\eqref{eq:integral}, we can write 
\begin{align}
    f(\chit,h(\chit)) - f(\chistar,h(\chistar))
    &= -\cF(\chit) + \cF(\chistar) - (\mc H(\chit) - \mc H(\chistar)) 
    \notag\\
    &
    = D(\chit,\chistar)(\chit - \chistar),\label{eq:D}
\end{align}
where $D(\chit,\chistar) \in \R^{(n + m) \times (n + m)}$ is given by 
\begin{align*}
    &D(\chit,\chistar) 
    \coloneqq \begin{bmatrix}
        -B(\chit,\chistar) -\rho A\T E(\chit,\chistar)A& -A\T E(\chit,\chistar)\\
        E(\chit,\chistar)A& \frac{1}{\rho}(E(\chit,\chistar) - I)
    \end{bmatrix}.
\end{align*}
We then have that for any $\chit \in \R^{n + m}$,
\begin{align}\label{eq:G_upper_bound}
    \norm{D(\chit,\chistar)(\chit - \chistar)}^2_M \leq \mu_1\norm{\chit - \chistar}^2_M,
\end{align}
where $\mu_1 \coloneqq \left(\max\left\{\beta_1 + \rho\norm{A}^2,\frac{1}{\rho}\right\}\right)^2$ and the inequality follows by inspection of $D(\chit,\chistar)(\chit - \chistar)$ and using $\norm{E(\chit,\chistar)} \leq 1$.
Now, let $W: \R^{n+m} \to \R$ be defined as 
\begin{equation}\label{eq:W_rs_primal_dual}
W(\chi) = (\chi - \chistar)\T M(\chi - \chistar),
\end{equation}
where $M \in \R^{(n+m) \times (n+m)}$ is defined as 
\begin{align}\label{eq:P_li_na}
    M \coloneqq \begin{bmatrix}
         cI& A\T\\
        A& cI
    \end{bmatrix}.
\end{align}
Note that $M \succ 0$ for any $c > \sqrt{\kappa_2}$ and, thus, $W$ satisfies~\eqref{eq:W_first_bound_generic} and~\eqref{eq:W_bound_generic}. 
To show \eqref{eq:W_minus_generic}, we evaluate $\Delta W(\chit) \coloneqq W(\chitp) - W(\chit)$ along the trajectories of~\eqref{eq:rs_primal_dual}, obtaining
\begin{align}
&\Delta W(\chit) 
\notag\\
&= \norm{\chit + \delta f(\chit,h(\chit)) - \chistar}_M^2 - \norm{\chit - \chistar}_M^2
\notag\\
&\stackrel{(a)}{=}
\norm{\chit + \delta f(\chit,h(\chit)) - \chistar - \delta f(\chistar, h(\chistar))}_M^2 
- \norm{\chit - \chistar}_M^2
\notag\\
&\stackrel{(b)}{=}
\norm{\chit -\chistar + \delta D(\chit,\chistar)(\chit - \chistar)}_M^2 - \norm{\chit - \chistar}_M^2
\notag\\
&\stackrel{(c)}{=}
\delta (\chit - \chistar)\T (D(\chit,\chistar)\T M + M D(\chit,\chistar))(\chit - \chistar) 
+ \delta^2 \norm{D(\chit,\chistar)(\chit - \chistar)}_M^2
,\label{eq:DeltaW_final}
\end{align}
where $(a)$ uses the fact that $f(\chistar,h(\chistar)) = 0$ (cf.~\eqref{eq:equilibrium_primal_dual}), $(b)$ rewrites the quantities using~\eqref{eq:D}, and $(c)$ expands $\norm{\cdot}_M^2$.
As in~\cite[Lemma~4]{qu2018exponential}, since it holds~\eqref{eq:B_bounds}, there exists $\bar{c} > 0$ such that, for any $c > \bar{c}$,
it holds
\begin{equation}\label{eq:result_lemma_li_na}
    D(\chit,\chistar)\T M + MD(\chit,\chistar) \leq -\tau M,
\end{equation}
where $\tau \coloneqq \frac{\kappa_1}{2c}$.
Therefore, by using~\eqref{eq:result_lemma_li_na} and~\eqref{eq:G_upper_bound}, we bound the right-hand side of~\eqref{eq:DeltaW_final} as 
\begin{equation*}
    \Delta W(\chit) \leq -\delta\left(\tau - \delta\mu_1\right)\norm{\chit - \chistar}_M^2. 
\end{equation*}

Setting $\bar{\delta} \coloneqq \frac{\tau}{\mu_1}$, \eqref{eq:DeltaW_final} ensures that for any $\delta \in (0,\bar{\delta})$, $W$ satisfies~\eqref{eq:W_minus_generic}, and the proof follows.

\end{document}